\newtheorem{theorem}{Theorem}[section]
\newtheorem{proposition}{Proposition}[section]
\newtheorem{corollary}{Corollary}[section]
\newtheorem{definition}{Definition}[section]
\newtheorem{remark}{Remark}[section]
\newcommand{\be}{\begin{equation}}
\newcommand{\ee}{\end{equation}}
\newcommand{\E}{\mathbb{E}}
\newcommand{\R}{\mathbb{R}}
\newcommand{\ind}{\mathbf{1}}
\title{\large{\bf{Technology Adoption and Network Externalities in Financial Systems: A Spatial-Network Approach}}}
\author{\large{\bf{Tatsuru Kikuchi\footnote{e-mail: tatsuru.kikuchi@e.u-tokyo.ac.jp}}}}
\affil{\small{\it{Center for Advanced Research in Finance, The University of Tokyo,}}\\
{\it{7-3-1 Hongo, Bunkyo-ku, Tokyo 113-0033 Japan}}}
\date{\small{(\today)}}
\begin{document}
\linespread{1.5}\selectfont

\maketitle


\begin{abstract}
This paper develops a unified framework for analyzing technology adoption in financial networks that incorporates spatial spillovers, network externalities, and their interaction. The framework characterizes adoption dynamics through a master equation whose solution admits a Feynman-Kac representation as expected cumulative adoption pressure along stochastic paths through spatial-network space. From this representation, I derive the Adoption Amplification Factor---a structural measure of technology leadership that captures the ratio of total system-wide adoption to initial adoption following a localized shock. A L\'{e}vy jump-diffusion extension with state-dependent jump intensity captures critical mass dynamics: below threshold, adoption evolves through gradual diffusion; above threshold, cascade dynamics accelerate adoption through discrete jumps. Applying the framework to SWIFT gpi adoption among 17 Global Systemically Important Banks, I find strong support for the two-regime characterization. Network-central banks adopt significantly earlier ($\rho = -0.69$, $p = 0.002$), and pre-threshold adopters have significantly higher amplification factors than post-threshold adopters (11.81 versus 7.83, $p = 0.010$). Founding members, representing 29 percent of banks, account for 39 percent of total system amplification---sufficient to trigger cascade dynamics. Controlling for firm size and network position, CEO age delays adoption by 11--15 days per year.

\bigskip
\noindent \textbf{JEL Classification:} O33, D85, L14, G21

\bigskip
\noindent \textbf{Keywords:} Technology adoption; Network externalities; Coordination failures; Levy processes; Critical mass dynamics; Financial technology
\end{abstract}

\newpage

\section{Introduction}

Technology adoption in networked markets exhibits distinctive dynamics that standard models struggle to capture. When the value of a technology depends on how many others have adopted---the defining feature of network externalities---coordination failures can trap markets in inefficient equilibria even when superior technologies are available. The transition from legacy payment systems to modern financial technology illustrates these dynamics vividly: banks benefit from new payment platforms only if their counterparties also adopt, and counterparties invest in integration only if enough banks participate. This coordination problem can sustain a low-adoption equilibrium indefinitely, even when all parties would prefer universal adoption.

Recent empirical work has documented both the importance of coordination frictions and the potential for policy to overcome them. \citet{crouzet2023shocks} show that India's demonetization---a large but temporary shock to cash availability---produced persistent increases in digital wallet adoption, with complementarities accounting for 45 percent of the adoption response. \citet{higgins2024financial} demonstrates that government distribution of debit cards to poor households in Mexico triggered supply-side adoption of point-of-sale terminals, which then spilled over to increase other consumers' card adoption by 21 percent. These findings confirm that coordination failures constrain technology diffusion and that coordinated shocks can shift economies to superior equilibria.

This paper develops a unified framework that captures three distinct channels of technology spillovers in financial networks: spatial spillovers reflecting geographic clustering of adopters, network spillovers reflecting adoption by business partners and counterparties, and the interaction between these channels that arises when geographic neighbors are also network-connected. Existing studies of network externalities focus exclusively on network linkages, neglecting spatial spillovers that arise from geographic proximity. However, financial institutions form business relationships disproportionately with geographic neighbors due to information advantages, regulatory similarities, and historical ties. The spatial-network interaction---absent from existing models---captures amplification when both channels operate simultaneously, which is empirically important in financial technology settings. The framework contributes to understanding the externality structure of financial networks in three ways. First, the Adoption Amplification Factor quantifies externalities by measuring how much a shock at one institution affects the entire system beyond the direct effect. Second, the channel decomposition reveals whether externalities flow primarily through network linkages, geographic proximity, or their interaction. Third, the Feynman-Kac representation provides a path-based interpretation: externalities propagate along all possible paths of economic linkage, weighted by probability and discounted by adjustment frictions.

A central methodological contribution is extending the baseline diffusion framework to incorporate L\'{e}vy jump-diffusion dynamics with state-dependent intensity. The baseline continuous model describes gradual adoption transmission appropriate for the pre-critical-mass regime, where institutions learn from neighbors and incrementally adjust adoption decisions. However, technology adoption often exhibits sudden cascades once critical mass is reached---adoption spreads rapidly through positive feedback as network effects dominate individual cost-benefit calculations. The L\'{e}vy extension captures these dynamics by adding a jump operator $\mathcal{J}[\tau]$ whose intensity depends on the current adoption level: $\lambda_J(\tau) = \lambda_0 + (\lambda_1 - \lambda_0) \cdot H(\tau - \bar{\tau}^*)$, where $H(\cdot)$ is the Heaviside function and $\bar{\tau}^*$ is the critical mass threshold. Below threshold, jumps are rare ($\lambda_J \approx \lambda_0$) and diffusion dominates; above threshold, jump intensity increases to $\lambda_1 \gg \lambda_0$, generating rapid cascades. In the limit $\lambda_0 \to 0$ and $\lambda_1 \to \infty$, the framework converges to deterministic cascade models, clarifying that continuous and discrete approaches describe different regimes of the same phenomenon. This nesting relationship unifies the gradual diffusion models of \citet{guimaraes2020dynamic} with the tipping point dynamics emphasized in \citet{katz1985network} and \citet{arthur1989competing}.

The theoretical contribution centers on demonstrating that the continuous spatial-network framework nests several canonical models from the technology adoption and dynamic coordination literatures as special cases through explicit discretization. The Katz-Shapiro model of network externalities and compatibility corresponds to the discrete network steady state, with the externality function identified as $v(n_i) = \nu_n \sum_j G_{ij}\tau_j / \kappa$. The Frankel-Pauzner model of dynamic coordination, which shows how aggregate shocks can resolve multiple equilibria, emerges when spatial and network dimensions collapse to a single aggregate state, with strategic complementarity parameter $(\nu_s + \nu_n)/\kappa$. The Guimaraes-Machado-Pereira framework of dynamic coordination with timing frictions maps directly to the decay parameter: their Poisson arrival rate of revision opportunities $\lambda$ equals the adjustment rate $\kappa$ in the master equation. Standard technology adoption hazard models correspond to the case where diffusion coefficients are zero ($\nu_s = \nu_n = 0$). These nesting relationships, established through the discrete Feynman-Kac formula, demonstrate that the framework is not an exotic alternative but a unification that reveals the common structure underlying conventional methods.

The framework yields several insights for technology policy in financial systems. The Adoption Amplification Factor identifies technology leaders whose adoption decisions have outsized influence on system-wide outcomes. Targeting subsidies or pilot programs at high-amplification institutions maximizes spillovers per dollar spent. The channel decomposition reveals whether adoption spreads primarily through geographic proximity, business relationships, or their interaction, informing whether policy should target geographic clusters, network hubs, or institutions that are central on both dimensions. The critical mass analysis provides guidance on intervention size: temporary interventions must push adoption above threshold to produce permanent effects, and the L\'{e}vy extension characterizes the threshold condition as $\int_0^T I(s) \cdot \mathcal{A}(s) \, ds > \bar{\tau}^* - \bar{\tau}_0$, where cumulative amplified intervention effects must exceed the gap between initial adoption and critical mass.

I apply the framework to study SWIFT gpi adoption among Global Systemically Important Banks in 2017. SWIFT gpi represents a major technological upgrade to interbank payment messaging, offering same-day settlement, end-to-end tracking, and confirmation of credit to beneficiary accounts. The adoption pattern provides a natural test of the framework's predictions: banks with higher amplification factors---those more central in the combined spatial-network structure---should be earlier adopters. The empirical analysis confirms this prediction strongly ($\rho = -0.69$, $p = 0.002$), with network-central banks adopting significantly earlier than peripheral banks. Founding members, representing 29 percent of banks in the sample, account for 39 percent of total system amplification, confirming that high-amplification institutions lead adoption. The two-regime dynamics predicted by the L\'{e}vy extension are evident in the data: pre-threshold adopters have significantly higher amplification factors than post-threshold adopters (11.81 versus 7.83, $p = 0.010$), and the cumulative adoption curve exhibits classic S-curve dynamics. Controlling for network position reveals the role of firm-level characteristics: CEO age delays adoption by 11--15 days per year conditional on firm size and network centrality.

The paper proceeds as follows. Section 2 develops the theoretical framework, beginning with the coordinate system and adoption representation, deriving the master equation from three independent economic foundations (heterogeneous agent aggregation, market equilibrium, and cost minimization), presenting the complete master equation with spatial-network interaction, establishing the Feynman-Kac representation and its discrete analog, deriving the Adoption Amplification Factor, demonstrating connections to conventional models through explicit mathematical identification, and extending to L\'{e}vy jump-diffusion with state-dependent intensity to capture critical mass dynamics. Section 3 presents Monte Carlo evidence validating the amplification factor as a predictor of technology leadership and demonstrating threshold dynamics. Section 4 applies the framework to SWIFT gpi adoption among Global Systemically Important Banks, presenting the empirical specification, regression results, and evidence for two-regime dynamics. Section 5 concludes.

\subsection{Related Literature}

This paper connects to several strands of literature on technology adoption, network externalities, and dynamic coordination.

The foundational analysis of network externalities begins with \citet{katz1985network}, who distinguish direct network effects from indirect network effects arising in two-sided markets. \citet{katz1986technology} analyze technology adoption in the presence of network externalities, showing that the market may fail to adopt a superior technology due to coordination failure. \citet{farrell1985standardization} study standardization when firms have private information about technology value. The subsequent literature has explored path dependence and lock-in \citep{arthur1989competing, david1985clio}, with \citet{liebowitz1994fable} and \citet{guimaraes2016qwerty} providing important qualifications about when lock-in to inferior technologies actually occurs.

The theoretical foundations for equilibrium selection in coordination games develop in two related traditions. The global games approach, pioneered by \citet{carlsson1993global} and extended by \citet{morris1998unique, morris2003global} and \citet{frankel2003equilibrium}, uses private information to select among equilibria. The dynamic approach, developed by \citet{frankel2000resolving} and \citet{burdzy2001fast}, introduces aggregate shocks that move the game through dominance regions. \citet{guimaraes2020dynamic} develop a general framework for dynamic coordination with timing frictions: agents receive Poisson opportunities to revise their actions, with the revision rate determining how quickly the economy adjusts to changing fundamentals. This paper shows that the timing friction in \citet{guimaraes2020dynamic} corresponds precisely to the decay parameter $\kappa$ in the master equation, with the discrete Feynman-Kac formula providing the explicit mathematical bridge.

The application to financial technology connects to a growing literature on fintech adoption and banking networks. \citet{buchak2018fintech} document the growth of fintech lending, while \citet{fuster2019role} study technology's effect on mortgage lending. The network dimension links to the financial contagion literature: \citet{allen2000financial} and \citet{freixas2000systemic} develop foundational contagion models, while \citet{acemoglu2015systemic} characterize how network structure determines whether connections facilitate risk sharing or amplify shocks. The framework developed here shares the Feynman-Kac foundation and L\'{e}vy extension structure with recent work on financial contagion \citep{kikuchi2025contagion}, but analyzes positive externalities---technology adoption that benefits counterparties---rather than negative externalities arising from stress transmission. This parallel structure suggests that the same network positions that make institutions systemically important for crisis propagation also make them technology leaders whose adoption decisions cascade broadly through the financial system.

\section{Theoretical Framework}
\label{sec:theory}

This section develops the theoretical framework in five stages. I first present the coordinate system and adoption representation. I then derive the master equation from three independent economic foundations---heterogeneous agent aggregation, market equilibrium, and cost minimization---establishing that the framework rests on fundamental principles rather than ad hoc specifications. Third, I present the Feynman-Kac representation and its discrete analog. Fourth, I develop the Adoption Amplification Factor. Finally, I demonstrate through explicit mathematical identification how canonical models emerge as special cases.

\subsection{Coordinate System and Adoption Representation}

We represent financial institutions by coordinates $(\mathbf{x}, \alpha)$ where $\mathbf{x} \in \Omega \subseteq \mathbb{R}^d$ denotes spatial location and $\alpha \in \mathcal{N} \subseteq \mathbb{R}$ denotes position in the network of business relationships. The adoption functional $\tau(\mathbf{x}, \alpha, t): \Omega \times \mathcal{N} \times [0,T] \to \mathbb{R}$ represents adoption intensity at each point in this coordinate space.

\begin{definition}[Spatial and Network Coordinates]
The \textit{spatial coordinate} $\mathbf{x}$ represents geographic location---latitude, longitude, and potentially economic distance metrics reflecting transportation costs or communication frictions.

The \textit{network position coordinate} $\alpha$ represents position in economic networks---centrality in correspondent banking relationships, position in payment flows, or role in interbank lending markets.
\end{definition}

The continuous representation avoids three limitations of discrete methods. First, it avoids arbitrary discretization of adoption intensity into binary indicators, preserving the dose-response relationship central to technology diffusion. Second, it avoids arbitrary spatial boundaries between regions, allowing smooth geographic variation in adoption patterns. Third, it avoids discrete network categories, enabling continuous market position that captures fine gradations in institutional relationships.

\begin{definition}[Source Term]
The \textit{source term} $S(\mathbf{x}, \alpha, t)$ represents exogenous adoption shocks entering the system. In technology adoption contexts, $S$ measures the intensity of direct adoption incentives at each location-network-time cell, arising from regulatory mandates, technological breakthroughs, or coordinated industry initiatives.
\end{definition}

The distinction between source $S$ and adoption functional $\tau$ is fundamental. The source represents direct, exogenous adoption pressure; the adoption functional represents the equilibrium response incorporating both direct effects and all spillovers through spatial and network channels.

\subsection{Derivation from Heterogeneous Agent Aggregation}

The first derivation proceeds from aggregating heterogeneous agent behavior, following the tradition of \citet{aiyagari1994uninsured} and \citet{huggett1993risk} in macroeconomics.

Consider a continuum of heterogeneous institutions indexed by type $\theta \in \Theta$ distributed over space and network positions. Each institution has idiosyncratic characteristics---size, risk appetite, technological capacity---captured by $\theta$. Institution $i$ of type $\theta$ at location $\mathbf{x}$ with network position $\alpha$ experiences adoption intensity:
\begin{equation}
\tau_i(\mathbf{x}, \alpha, t, \theta) = \tau_0(\mathbf{x}, \alpha) + \tau(\mathbf{x}, \alpha, t) + \varepsilon_i(\theta)
\end{equation}
where $\tau_0$ is baseline adoption, $\tau$ is the aggregate adoption effect to be determined, and $\varepsilon_i(\theta)$ captures idiosyncratic variation.

Institutions optimize location and network position subject to adjustment costs. The state of institution $i$ evolves according to the stochastic differential equations:
\begin{align}
dX_t^i &= \mu_s(X_t^i, A_t^i, \theta_i) \, dt + \sigma_s(X_t^i, A_t^i, \theta_i) \, dB_t^s \label{eq:agent_spatial} \\
dA_t^i &= \mu_n(X_t^i, A_t^i, \theta_i) \, dt + \sigma_n(X_t^i, A_t^i, \theta_i) \, dB_t^n \label{eq:agent_network}
\end{align}
where $(B_t^s, B_t^n)$ are independent Brownian motions representing location and network uncertainty. The drift terms $\mu_s, \mu_n$ capture directed adjustments---institutions moving toward more favorable positions. The diffusion terms $\sigma_s, \sigma_n$ capture randomness in adjustment outcomes---search frictions, information imperfections, and relationship formation uncertainty.

The joint density $f(\mathbf{x}, \alpha, t)$ of institutions over spatial and network coordinates evolves according to the Kolmogorov forward equation:
\begin{equation}
\frac{\partial f}{\partial t} = -\nabla \cdot (\boldsymbol{\mu}_s f) - \frac{\partial}{\partial \alpha}(\mu_n f) + \frac{1}{2}\nabla \cdot (\boldsymbol{\Sigma}_s \nabla f) + \frac{1}{2}\frac{\partial^2}{\partial \alpha^2}(\sigma_n^2 f)
\label{eq:kolmogorov}
\end{equation}

This equation describes how the population distribution shifts as institutions relocate and adjust network positions.

\begin{proposition}[Aggregation Result]
\label{prop:aggregation}
Under the following regularity conditions: (i) bounded heterogeneity: $\sup_\theta \|\sigma(\cdot, \theta)\| < \infty$; (ii) ergodic dynamics: the process $(X_t, A_t)$ has a unique stationary distribution for each $\theta$; (iii) smooth aggregation: the mapping $\theta \mapsto \tau(\cdot, \theta)$ is measurable; the aggregate adoption functional satisfies:
\begin{equation}
\frac{\partial \tau}{\partial t} = \nu_s \nabla^2 \tau + \nu_n \frac{\partial^2 \tau}{\partial \alpha^2} - \kappa \tau + S(\mathbf{x}, \alpha, t)
\label{eq:master_basic}
\end{equation}
where $\nu_s = \frac{1}{2}\mathbb{E}_\theta[\sigma_s^2(\theta)]$ is mean spatial diffusivity, $\nu_n = \frac{1}{2}\mathbb{E}_\theta[\sigma_n^2(\theta)]$ is mean network diffusivity, and $\kappa$ reflects mean reversion from competitive pressure.
\end{proposition}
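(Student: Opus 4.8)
The plan is to derive \eqref{eq:master_basic} by aggregating the individual problem in two nested steps --- first averaging out within-type idiosyncratic noise, then averaging over the type distribution --- with the Kolmogorov equation \eqref{eq:kolmogorov} supplying the transport of adoption through coordinate space and the institution-level adoption law supplying the source $S$ and the competitive-erosion term $-\kappa\tau$. I would take the aggregate adoption functional to be the density-weighted, type-averaged conditional mean $\tau(\mathbf{x},\alpha,t)=\E_\theta\big[\E[\tau_i-\tau_0-\varepsilon_i(\theta)\mid X^i_t=\mathbf{x},A^i_t=\alpha]\big]$, so that the object whose evolution is characterized is precisely ``aggregate adoption.''

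First, fix a type $\theta$ and track the type-conditional field $\tau_\theta(\mathbf{x},\alpha,t):=\E[\tau_i-\tau_0-\varepsilon_i(\theta)\mid X^i_t=\mathbf{x},A^i_t=\alpha]$. Its dynamics have two inputs. (a) The institution's adoption intensity responds to exogenous incentives and erodes under competitive pressure, $d(\tau_i-\tau_0)=\big(S(X^i_t,A^i_t,t)-\kappa(\tau_i-\tau_0)\big)dt$ plus idiosyncratic noise, which contributes $-\kappa\tau_\theta+S$ once the noise is averaged out (here I invoke an exact law of large numbers over the continuum of institutions of type $\theta$: since $\varepsilon_i(\theta)$ is mean-zero and, by ergodicity (ii), the empirical distribution stabilizes, cross-sectional averages of $\varepsilon_i$ coincide with their conditional expectation, namely zero). (b) Institutions carry their adoption as they relocate and re-position along \eqref{eq:agent_spatial}--\eqref{eq:agent_network}; applying It\^o's formula to the adoption mass $g_\theta:=\tau_\theta f_\theta$ --- equivalently, using the Fokker--Planck operator behind \eqref{eq:kolmogorov} --- gives
\[
\partial_t g_\theta=-\nabla\!\cdot(\boldsymbol{\mu}_s g_\theta)-\partial_\alpha(\mu_n g_\theta)+\tfrac12\nabla\!\cdot\!\big(\boldsymbol{\Sigma}_s(\theta)\nabla g_\theta\big)+\tfrac12\partial_\alpha^2\!\big(\sigma_n^2(\theta)g_\theta\big)-\kappa g_\theta+S f_\theta .
\]
Evaluating the population at its stationary density $f_\theta^\star$ (ergodicity (ii)) and dividing by $f_\theta^\star$, the terms proportional to $\tau_\theta$ collect into the stationary Fokker--Planck operator applied to $f_\theta^\star$ and hence vanish, leaving a drift--diffusion--reaction equation for $\tau_\theta$ whose second-order coefficients are $\tfrac12\sigma_s^2(\theta)$ and $\tfrac12\sigma_n^2(\theta)$, with a residual first-order term $\mathbf{b}_\theta\cdot\nabla\tau_\theta$ representing directed adjustment and the reaction term $-\kappa\tau_\theta+S$.

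Second, integrate over types. Assumption (iii) makes $\theta\mapsto\tau_\theta$ measurable, so $\tau=\E_\theta[\tau_\theta]$ is well defined; assumption (i) bounds $\sigma_s^2(\cdot),\sigma_n^2(\cdot)$ uniformly, which licenses interchanging $\E_\theta$ with the spatial and network derivatives (dominated convergence) and with the linear operations in $\tau_\theta$. The second-order coefficients aggregate to $\nu_s=\tfrac12\E_\theta[\sigma_s^2(\theta)]$ and $\nu_n=\tfrac12\E_\theta[\sigma_n^2(\theta)]$; the reaction term keeps $-\kappa\tau+S$; and the effective advection $\E_\theta[\mathbf{b}_\theta\cdot\nabla\tau_\theta]$ vanishes because, at the stationary configuration, directed adjustment has zero population mean (no systematic relocation or re-positioning in aggregate) --- which is exactly the approximation that isolates pure mean reversion. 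This produces \eqref{eq:master_basic}.

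I expect the main obstacle to be the passage from finitely-correlated individual noise to a deterministic aggregate, i.e., the exact law of large numbers over a continuum that justifies replacing cross-sectional averages of $\varepsilon_i(\theta)$ by zero; making this rigorous requires constructing the continuum on a suitable probability space (a Fubini extension in the sense of Sun), without which the ``aggregate'' would retain sampling noise. A secondary, modeling-level obstacle is discarding the advective term $\mathbf{b}_\theta\cdot\nabla\tau_\theta$: one must argue that near the stationary configuration directed adjustment manifests only as the scalar reversion $-\kappa\tau$ rather than as genuine transport of the adoption field, which relies on the stationary distribution from (ii) being attained on the relevant timescale and on the adoption field varying slowly relative to the scale of adjustments.
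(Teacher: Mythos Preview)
The paper does not prove this proposition. After stating it, the text offers only a one-sentence interpretive remark---that the diffusion coefficients ``emerge from averaging individual mobility variances across types''---and then moves directly to the market-equilibrium derivation in the next subsection. There is thus no proof in the paper against which to compare your attempt.

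Your proposal is considerably more detailed than anything the paper supplies and is, in outline, a reasonable way to fill the gap. The key move the paper leaves implicit and you make explicit is positing a micro-level adoption law $d(\tau_i-\tau_0)=\big(S-\kappa(\tau_i-\tau_0)\big)\,dt+\text{noise}$; without such a law there is no mechanism to produce the $-\kappa\tau+S$ terms, since equations \eqref{eq:agent_spatial}--\eqref{eq:kolmogorov} govern only position and say nothing about how adoption itself evolves. Your Fokker--Planck-for-adoption-mass computation then recovers the second-order coefficients $\tfrac12\sigma_s^2(\theta),\tfrac12\sigma_n^2(\theta)$ in the symmetric case, and you correctly note that a residual advection term survives in general and must be argued away separately. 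The two obstacles you flag---the continuum law of large numbers and the suppression of the first-order drift---are precisely where the argument is heuristic; the paper's stated hypotheses (i)--(iii) are not by themselves strong enough to close either gap (ergodicity guarantees existence of a stationary $f_\theta^\star$, not that the population occupies it on the adoption timescale). One minor point: your operational definition $\tau=\E_\theta\big[\E[\tau_i-\tau_0-\varepsilon_i\mid X_t^i=\mathbf{x},A_t^i=\alpha]\big]$ is vacuous under the paper's decomposition $\tau_i=\tau_0+\tau+\varepsilon_i$, since the inner expression is then identically $\tau$; it would be cleaner to define the aggregate as the density-weighted conditional mean of $\tau_i-\tau_0$ and let $\varepsilon_i$ be the residual.
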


The aggregation result shows that heterogeneous institution behavior generates aggregate dynamics governed by a partial differential equation. The diffusion coefficients $(\nu_s, \nu_n)$ emerge from averaging individual mobility variances across types; they measure how quickly adoption spreads through the population as institutions adjust positions and form new relationships.

\subsection{Derivation from Market Equilibrium}

An independent derivation proceeds from market equilibrium conditions, connecting observed adoption volatility to underlying market structure.

In markets with search frictions, matching delays, or information asymmetries, adoption rates fluctuate around equilibrium values. The observed volatility $\sigma^2$ of adoption processes relates to underlying market adjustment through the equilibrium volatility relation:
\begin{equation}
\sigma^2 = 2D\kappa
\label{eq:fluctuation_dissipation}
\end{equation}
where $D$ is a diffusion coefficient measuring the amplitude of fluctuations and $\kappa$ is the adjustment rate toward equilibrium.

This relation emerges from the stochastic process governing adoption dynamics:
\begin{equation}
d\tau = -\kappa(\tau - \tau^*) \, dt + \sigma \, dB
\end{equation}
where $\tau^*$ is equilibrium adoption and $\kappa$ governs mean reversion speed. At stationarity, variance satisfies $\text{Var}(\tau) = \sigma^2/(2\kappa)$, which rearranges to (\ref{eq:fluctuation_dissipation}).

Connecting observed dynamics to adoption propagation: spatial adoption volatility $\sigma_s^2$ implies spatial diffusion $\nu_s = \sigma_s^2/(2\kappa)$; network adoption volatility $\sigma_n^2$ implies network diffusion $\nu_n = \sigma_n^2/(2\kappa)$. Markets with high adoption volatility---active experimentation, frequent technology updates---exhibit rapid spatial diffusion; markets with stable adoption patterns exhibit slow diffusion.

\subsection{Derivation from Cost Minimization}

A third derivation proceeds from cost minimization, following variational principles underlying market equilibrium.

\begin{definition}[Adjustment Cost Functional]
The total adjustment cost functional is:
\begin{equation}
\mathcal{C}[\tau] = \int_0^T \int_\Omega \int_{\mathcal{N}} \left[ \frac{1}{2}\left(\frac{\partial \tau}{\partial t}\right)^2 + \frac{\nu_s}{2}|\nabla \tau|^2 + \frac{\nu_n}{2}\left(\frac{\partial \tau}{\partial \alpha}\right)^2 + \frac{\kappa}{2}\tau^2 - S\tau \right] d\alpha \, d\mathbf{x} \, dt
\end{equation}
\end{definition}

The terms have economic interpretations in the technology adoption context. The term $\frac{1}{2}(\partial\tau/\partial t)^2$ captures temporal adjustment costs: rapidly changing adoption is costly due to integration frictions, training requirements, and coordination failures. The term $\frac{\nu_s}{2}|\nabla\tau|^2$ captures spatial gradient costs: maintaining adoption differentials across space is costly due to competitive pressure from neighboring institutions. The term $\frac{\nu_n}{2}(\partial\tau/\partial\alpha)^2$ captures network gradient costs: maintaining adoption differentials across network positions is costly due to interoperability pressure from counterparties. The term $\frac{\kappa}{2}\tau^2$ captures level costs: deviating from baseline technology is costly due to switching costs and legacy system maintenance. The term $-S\tau$ captures policy benefits: the intervention $S$ shifts optimal adoption.

\begin{proposition}[Euler-Lagrange Equation]
\label{prop:euler_lagrange}
The adoption functional $\tau^*$ minimizing $\mathcal{C}[\tau]$ satisfies:
\begin{equation}
\frac{\partial \tau}{\partial t} = \nu_s \nabla^2 \tau + \nu_n \frac{\partial^2 \tau}{\partial \alpha^2} - \kappa \tau + S
\end{equation}
\end{proposition}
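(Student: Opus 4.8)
The plan is to compute the first variation of the cost functional $\mathcal{C}$ and invoke the fundamental lemma of the calculus of variations, with the understanding that the dynamic content to be extracted is the gradient-flow (steepest-descent) relaxation associated with $\mathcal{C}$. First I would fix a perturbation direction $h(\mathbf{x},\alpha,t)$, smooth and compactly supported in the interior of $\Omega\times\mathcal{N}\times(0,T)$ (equivalently, imposing natural/Neumann conditions on $\partial\Omega$ and $\partial\mathcal{N}$ and prescribed data at the temporal endpoints, so that every boundary term below vanishes), and set $\phi(\epsilon):=\mathcal{C}[\tau^*+\epsilon h]$, which is admissible for all $\epsilon$ since $h$ has compact support. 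Because the integrand is quadratic in $\tau$ and its first derivatives plus the linear term $-S\tau$, differentiation under the integral sign is immediate and gives
\[
\phi'(0)=\int_0^T\!\!\int_\Omega\!\!\int_{\mathcal{N}}\Big[\partial_t\tau^*\,\partial_t h+\nu_s\nabla\tau^*\!\cdot\!\nabla h+\nu_n\,\partial_\alpha\tau^*\,\partial_\alpha h+\kappa\,\tau^* h-S\,h\Big]\,d\alpha\,d\mathbf{x}\,dt .
\]

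Next I would integrate by parts in each variable to transfer derivatives off $h$ — using $\int\partial_t\tau^*\,\partial_t h\,dt=-\int(\partial_t^2\tau^*)\,h\,dt$, $\int\nabla\tau^*\!\cdot\!\nabla h\,d\mathbf{x}=-\int(\nabla^2\tau^*)\,h\,d\mathbf{x}$, and $\int\partial_\alpha\tau^*\,\partial_\alpha h\,d\alpha=-\int(\partial_\alpha^2\tau^*)\,h\,d\alpha$, with all boundary contributions vanishing by the support assumption — and collect the coefficient of $h$. Requiring $\phi'(0)=0$ for every admissible $h$, the fundamental lemma forces that coefficient to vanish pointwise, which is the stationarity (Euler--Lagrange) condition for $\mathcal{C}$. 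A routine supporting point is that the energy is coercive and strictly convex in $\tau$ (immediate since $\kappa>0$ and the gradient terms are nonnegative), so the associated minimization problem is well posed and its stationary point is unique.

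The step that requires care, and the one I would flag as the main obstacle, is the treatment of the temporal term $\tfrac12(\partial_t\tau)^2$: read as a kinetic/inertial term it contributes $-\partial_t^2\tau$, so the literal Euler--Lagrange equation of $\mathcal{C}$ is second order in time (of wave/ultrahyperbolic type), not the first-order parabolic master equation \eqref{eq:master_basic}. To recover \eqref{eq:master_basic} one identifies the adjustment dynamics with the $L^2$ gradient flow of the spatial--network potential $\mathcal{E}[\tau]:=\int_\Omega\int_{\mathcal{N}}\big[\tfrac{\nu_s}{2}|\nabla\tau|^2+\tfrac{\nu_n}{2}(\partial_\alpha\tau)^2+\tfrac{\kappa}{2}\tau^2-S\tau\big]\,d\alpha\,d\mathbf{x}$, i.e.\ $\partial_t\tau=-\,\delta\mathcal{E}/\delta\tau$; the variational derivative $\delta\mathcal{E}/\delta\tau$ is computed exactly by the spatial--network portion of the calculation above and equals $-\nu_s\nabla^2\tau-\nu_n\partial_\alpha^2\tau+\kappa\tau-S$, whence $\partial_t\tau=\nu_s\nabla^2\tau+\nu_n\partial_\alpha^2\tau-\kappa\tau+S$ as asserted. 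I would therefore present the proposition in this gradient-flow form — the master equation is the steepest-descent (overdamped relaxational) dynamics of $\mathcal{C}$, equivalently the flow that drives $\tau$ toward the unique minimizer of $\mathcal{E}$ — since that is precisely the reading under which the stated first-order equation follows from the variational principle, and I would make the inertial-versus-dissipative interpretation of the temporal penalty explicit in the statement.
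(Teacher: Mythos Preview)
Your first-variation computation and integration by parts are identical to the paper's proof, and you correctly identify the central difficulty: the literal Euler--Lagrange equation of $\mathcal{C}$ is second order in time, carrying a $-\partial_t^2\tau$ term, not the first-order parabolic equation claimed. Where you diverge is in the resolution. The paper simply invokes a \emph{quasi-static} approximation, setting $\partial_t^2\tau\approx 0$ and declaring that this yields the master equation; you instead reinterpret the dynamics as the $L^2$ gradient flow of the spatial--network potential $\mathcal{E}[\tau]$, so that $\partial_t\tau=-\delta\mathcal{E}/\delta\tau$ directly produces the first-order equation. Your route is arguably more honest: taken literally, the paper's quasi-static limit collapses the Euler--Lagrange equation to the \emph{steady-state} relation $\nu_s\nabla^2\tau+\nu_n\partial_\alpha^2\tau-\kappa\tau+S=0$, not to a first-order-in-time equation, so an additional (unstated) step is needed to reinsert $\partial_t\tau$. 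The gradient-flow reading supplies exactly that step and makes transparent the dissipative/overdamped interpretation of the temporal penalty. The paper's approach is shorter and matches physics conventions for passing from Lagrangian to relaxational dynamics; yours is more self-contained and makes the logical status of the ``derivation'' explicit. Either way, the proposition is best read as a variational \emph{motivation} rather than a strict minimization statement, and your flagging of that point is well taken.
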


\begin{proof}
The first variation of $\mathcal{C}$ with respect to $\tau$ must vanish for all admissible variations $\eta$:
\begin{equation}
\delta\mathcal{C} = \int_0^T \int_\Omega \int_{\mathcal{N}} \left[ \frac{\partial \tau}{\partial t}\frac{\partial \eta}{\partial t} + \nu_s \nabla\tau \cdot \nabla\eta + \nu_n \frac{\partial\tau}{\partial\alpha}\frac{\partial\eta}{\partial\alpha} + \kappa\tau\eta - S\eta \right] d\alpha \, d\mathbf{x} \, dt = 0
\end{equation}
Integrating by parts and assuming boundary terms vanish yields:
\begin{equation}
\int_0^T \int_\Omega \int_{\mathcal{N}} \left[ -\frac{\partial^2 \tau}{\partial t^2} - \nu_s \nabla^2 \tau - \nu_n \frac{\partial^2 \tau}{\partial \alpha^2} + \kappa \tau - S \right] \eta \, d\alpha \, d\mathbf{x} \, dt = 0
\end{equation}
For quasi-static evolution where $\partial^2\tau/\partial t^2 \approx 0$, this yields the master equation.
\end{proof}

The cost minimization derivation connects the master equation to optimization principles. The parameters $(\nu_s, \nu_n, \kappa)$ have natural interpretations as relative costs: high $\nu_s$ means spatial arbitrage is rapid (low cost of spatial gradients); high $\kappa$ means competitive pressure is strong (high cost of deviating from baseline technology).

\subsection{The Complete Master Equation with Interaction}

The three derivations above establish the basic master equation without spatial-network interaction. The complete specification adds the interaction term capturing amplification when geographic and network proximity coincide.

\begin{definition}[Master Equation]
\label{def:master}
The adoption field $\tau(\mathbf{x}, \alpha, t)$ evolves according to:
\begin{equation}
\frac{\partial \tau}{\partial t} = \nu_s \nabla^2 \tau + \nu_n \frac{\partial^2 \tau}{\partial \alpha^2} + \lambda \frac{\partial^2 \tau}{\partial \mathbf{x} \partial \alpha} - \kappa \tau + S(\mathbf{x}, \alpha, t)
\label{eq:master}
\end{equation}
where $\nu_s \geq 0$ is spatial diffusion, $\nu_n \geq 0$ is network diffusion, $\lambda$ is spatial-network interaction, $\kappa > 0$ is adjustment speed, and $S$ is the exogenous adoption shock.
\end{definition}

The interaction term $\lambda \partial^2\tau/\partial\mathbf{x}\partial\alpha$ captures amplification when institutions are proximate on both dimensions. In financial networks, institutions form business relationships disproportionately with geographic neighbors due to information advantages, regulatory similarities, and historical ties. When a geographic neighbor is also a network partner, adoption influence compounds: the neighbor's adoption affects the focal institution through both demonstration effects (spatial channel) and interoperability benefits (network channel), with the interaction term capturing reinforcement beyond the sum of separate effects.

\begin{remark}[Parameter Interpretation for Technology Adoption]
The parameters have structural interpretations in the adoption context:

The spatial diffusion coefficient $\nu_s$ measures geographic adoption spillovers reflecting local demonstration effects, labor mobility spreading technological knowledge, and regional market integration creating competitive pressure to adopt.

The network diffusion coefficient $\nu_n$ measures adoption spillovers through business relationships reflecting interoperability benefits when counterparties adopt, learning from network partners' experiences, and coordination incentives in bilateral transactions.

The interaction coefficient $\lambda$ captures amplification when both channels coincide, common in financial markets where correspondent banks, payment network partners, and syndicate members are often geographic neighbors.

The adjustment parameter $\kappa$ measures how rapidly institutions respond to adoption incentives, corresponding to the timing friction in \citet{guimaraes2020dynamic}: higher $\kappa$ implies faster adjustment and shorter waiting times until adoption decisions are revised.
\end{remark}

\subsection{Feynman-Kac Representation and Discrete Analog}

The master equation admits a probabilistic solution that provides both computational methods and economic intuition.

\begin{theorem}[Feynman-Kac Representation]
\label{thm:feynman_kac}
The solution to the master equation (\ref{eq:master}) admits the representation:
\begin{equation}
\tau(\mathbf{x}, \alpha, t) = \E_{(\mathbf{x}, \alpha)}\left[ e^{-\kappa t} \tau_0(X_t, A_t) + \int_0^t e^{-\kappa(t-s)} S(X_s, A_s, s) \, ds \right]
\label{eq:feynman_kac}
\end{equation}
where $(X_s, A_s)_{s \geq 0}$ is the diffusion process with generator $\mathcal{L} = \nu_s \nabla^2 + \nu_n \partial^2/\partial\alpha^2 + \lambda \partial^2/\partial \mathbf{x} \partial \alpha$ started at $(X_0, A_0) = (\mathbf{x}, \alpha)$.
\end{theorem}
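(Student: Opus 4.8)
The plan is to establish the representation (\ref{eq:feynman_kac}) by the martingale (It\^o-Dynkin) argument. Suppose $\tau$ is a classical $C^{2,1}$ solution of (\ref{eq:master}) on $\Omega\times\mathcal{N}\times[0,T]$ with at most polynomial growth in $(\mathbf{x},\alpha)$, and assume $\tau_0$ and $S(\cdot,\cdot,s)$ likewise have polynomial growth, uniformly for $s\in[0,T]$. Fix $(\mathbf{x},\alpha)$ and $t\in(0,T]$, let $(X_s,A_s)_{s\ge 0}$ be the diffusion with generator $\mathcal{L}$ started at $(X_0,A_0)=(\mathbf{x},\alpha)$, and set, for $s\in[0,t]$,
\begin{equation}
Y_s \;=\; e^{-\kappa s}\,\tau(X_s,A_s,t-s)\;+\;\int_0^s e^{-\kappa r}\,S(X_r,A_r,t-r)\,dr .
\end{equation}
The conclusion follows once $Y$ is shown to be a martingale on $[0,t]$: then $\tau(\mathbf{x},\alpha,t)=Y_0=\E_{(\mathbf{x},\alpha)}[Y_t]$, and since $\tau(\cdot,\cdot,0)=\tau_0$, reindexing the source integral ($r\mapsto t-r$) brings $\E_{(\mathbf{x},\alpha)}[Y_t]$ to the discounted form (\ref{eq:feynman_kac}). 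As a fallback I would instead verify directly that the right-hand side of (\ref{eq:feynman_kac}) solves (\ref{eq:master}) using the explicit Gaussian transition kernel of $(X_s,A_s)$, and then appeal to uniqueness in the polynomial-growth class.

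The argument has three steps. First, I would confirm that $\mathcal{L}=\nu_s\nabla^2+\nu_n\,\partial^2/\partial\alpha^2+\lambda\,\partial^2/\partial\mathbf{x}\,\partial\alpha$ generates a well-defined (possibly degenerate) Gaussian diffusion; this needs the coefficient matrix $\bigl(\begin{smallmatrix}\nu_s & \lambda/2\\ \lambda/2 & \nu_n\end{smallmatrix}\bigr)$ to be positive semidefinite, i.e.\ $\lambda^2\le 4\nu_s\nu_n$, an implicit restriction on the parameters of (\ref{eq:master}) without which the equation is not parabolic; under it, $(X_s,A_s)$ is Gaussian with covariance linear in $s$ and so has finite moments of every order, bounded uniformly on $[0,t]$. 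Second, I would apply It\^o's formula to $g(s,\xi):=e^{-\kappa s}\tau(\xi,t-s)$ along $(X_s,A_s)$: the drift collapses to $e^{-\kappa s}\bigl(\mathcal{L}\tau-\kappa\tau-\partial_t\tau\bigr)$, which by (\ref{eq:master}) equals $-e^{-\kappa s}S$, exactly cancelling the time-derivative of the accumulated-source term in $Y_s$; hence $dY_s$ carries only the stochastic-integral part, so $Y$ is a local martingale. Third, I would upgrade $Y$ to a genuine martingale: the polynomial growth of $\tau,\tau_0,S$ together with the uniform Gaussian moment bounds gives $\sup_{s\le t}\E_{(\mathbf{x},\alpha)}[Y_s^2]<\infty$ (a Burkholder-Davis-Gundy or Gr\"onwall estimate on the stochastic integral suffices), which both closes the martingale argument and validates the Fubini interchange in the final step.

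I expect the main obstacle to be conceptual, not computational: the representation holds only when the mixed term $\lambda\,\partial^2/\partial\mathbf{x}\,\partial\alpha$ is dominated, $\lambda^2\le 4\nu_s\nu_n$, so that $\mathcal{L}$ is (degenerate-)elliptic and an honest diffusion $(X_s,A_s)$ exists; outside this regime (\ref{eq:master}) is no longer parabolic and no Feynman-Kac representation of this type is available, so the condition must be imposed --- ideally stated alongside Definition~\ref{def:master}. The remaining work lies in the integrability upgrade of the third step, because the source is sampled along the whole path and its time-integral must be controlled uniformly in $s$; this is routine once the Gaussian tail bounds are at hand. If the verification route is used instead, the difficulty shifts to justifying differentiation under the expectation (dominated convergence against the Gaussian kernel) and to establishing uniqueness within the chosen growth class.
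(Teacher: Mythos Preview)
Your proposal is correct and considerably more detailed than the paper's own argument. The paper uses a one-line reduction: set $u(\mathbf{x},\alpha,t)=e^{\kappa t}\tau(\mathbf{x},\alpha,t)$, verify that $u$ satisfies the undamped equation $\partial_t u=\mathcal{L}u+e^{\kappa t}S$, invoke the ``standard Feynman-Kac formula for parabolic PDEs'' as a known black box, and transform back. Your route instead constructs the discounted martingale $Y_s$ directly via It\^o, handling the $-\kappa\tau$ term through the factor $e^{-\kappa s}$ rather than substituting it away. What you gain is self-containment---you prove the representation rather than citing it---and, more substantively, you surface the ellipticity constraint $\lambda^2\le 4\nu_s\nu_n$ without which $\mathcal{L}$ does not generate a diffusion; the paper leaves this condition entirely unstated. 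The paper's route is shorter precisely because it offloads both the existence of $(X_s,A_s)$ and the integrability estimates to the cited result. One small caveat on your final reindexing: carried through literally, $Y_t$ delivers the source term $\int_0^t e^{-\kappa r}S(X_r,A_r,t-r)\,dr$, which after $s=t-r$ becomes $\int_0^t e^{-\kappa(t-s)}S(X_{t-s},A_{t-s},s)\,ds$; this is the Duhamel form and differs from the paper's stated integrand $S(X_s,A_s,s)$ whenever $S$ depends jointly on space and time, so your claim that reindexing lands exactly on (\ref{eq:feynman_kac}) as written is not quite right---though the discrepancy lies in the paper's statement, not in your derivation.
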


\begin{proof}
Define the transformed function $u(\mathbf{x}, \alpha, t) = e^{\kappa t}\tau(\mathbf{x}, \alpha, t)$. Substituting into the master equation yields $\partial u/\partial t = \mathcal{L}u + e^{\kappa t}S$. By the standard Feynman-Kac formula for parabolic PDEs:
\begin{equation}
u(\mathbf{x}, \alpha, t) = \mathbb{E}_{(\mathbf{x}, \alpha)}\left[ u_0(X_t, A_t) + \int_0^t e^{\kappa s} S(X_s, A_s, s) \, ds \right]
\end{equation}
Substituting $u = e^{\kappa t}\tau$ and rearranging yields (\ref{eq:feynman_kac}).
\end{proof}

The representation has direct economic content: adoption intensity equals the expected cumulative exposure to adoption shocks along all paths of economic linkage, discounted at rate $\kappa$. Institutions in densely connected network regions or geographically central locations receive contributions from more paths, elevating their adoption intensity even without direct shocks.

\begin{proposition}[Discrete Feynman-Kac Formula]
\label{prop:discrete_fk}
For discrete time periods $t = 0, 1, \ldots, T$ and discrete units $i = 1, \ldots, N$, the Feynman-Kac representation admits the discrete analog:
\begin{equation}
\tau_{i,t} = \sum_{s=0}^{t-1} (1-\kappa\Delta t)^{t-s} \cdot \E[S_{i(s),s} | i(t) = i] \cdot \Delta t
\label{eq:discrete_fk}
\end{equation}
where $i(s)$ traces a stochastic path backward through the network from unit $i$ at time $t$ to earlier times, and $(1-\kappa\Delta t)^{t-s}$ are exponentially decaying weights.
\end{proposition}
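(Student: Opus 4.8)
The plan is to obtain the discrete formula by discretizing the master equation~(\ref{eq:master}) with an explicit time-marching scheme, rewriting one step of that scheme as the action of a sub-stochastic operator on the adoption vector, and then unrolling the recursion into a sum over backward random-walk paths. Concretely, place a grid on $\Omega\times\mathcal{N}$ whose nodes are the discrete units $i=1,\dots,N$, replace the continuous operator $\mathcal{L}=\nu_s\nabla^2+\nu_n\,\partial^2/\partial\alpha^2+\lambda\,\partial^2/\partial\mathbf{x}\,\partial\alpha$ by a matrix $L=(L_{ij})$ assembled from the spatial and network adjacency weights $G_{ij}$ (standard three-point second-difference stencils for the pure $\nu_s,\nu_n$ terms and a mixed stencil for $\lambda$) and normalized so that every row of $L$ sums to zero, and discretize $\partial\tau/\partial t$ by a forward difference. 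Writing $S_{i,t}$ for the source sampled at node $i$ and time step $t$, and injecting the source before the propagation step, the scheme takes the form
\begin{equation}
\tau_{i,t+1} = (1-\kappa\Delta t)\sum_{j} P_{ij}\bigl(\tau_{j,t}+\Delta t\,S_{j,t}\bigr),
\qquad \tau_{i,0}=0,
\end{equation}
where $P_{ij}=\delta_{ij}+\Delta t\,L_{ij}$. This is consistent with~(\ref{eq:master}) to first order in $\Delta t$, since $\sum_j P_{ij}\tau_{j,t}=\tau_{i,t}+\Delta t\,(L\tau)_i$ and $\sum_j P_{ij}S_{j,t}=S_{i,t}+O(\Delta t)$.

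The next task is to verify that $P=(P_{ij})$ is a genuine stochastic matrix, so that $(1-\kappa\Delta t)P$ is sub-stochastic and $1-\kappa\Delta t$ is the per-period discount (survival) factor. Its off-diagonal entries $\Delta t\,L_{ij}$ are nonnegative because the diffusion coefficients and adjacency weights are nonnegative; each row sums to $1$ because the rows of $L$ sum to zero; and the diagonal entries $P_{ii}=1-\Delta t\sum_{j\ne i}L_{ij}$ are nonnegative under the Courant--Friedrichs--Lewy step-size restriction $\Delta t\le\bigl(\max_i\sum_{j\ne i}L_{ij}\bigr)^{-1}$, which together with $\kappa\Delta t<1$ gives the claim. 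Thus $P$ is the transition matrix of a random walk on the combined spatial--network graph. Having this, I would prove the representation by induction on $t$: the base case $t=0$ is the empty sum, equal to $\tau_{i,0}=0$; and substituting the hypothesis $\tau_{k,t}=\sum_{s=0}^{t-1}(1-\kappa\Delta t)^{t-s}\,\Delta t\sum_j(P^{t-s})_{kj}S_{j,s}$ into the one-step recursion, using the semigroup identity $\sum_k P_{ik}(P^{t-s})_{kj}=(P^{t+1-s})_{ij}$ and peeling off the fresh source $S_{\cdot,t}$ as the new $s=t$ term, yields
\begin{equation}
\tau_{i,t+1}=\sum_{s=0}^{t}(1-\kappa\Delta t)^{(t+1)-s}\,\Delta t\sum_j(P^{(t+1)-s})_{ij}\,S_{j,s}.
\end{equation}
Finally I would read $\sum_j(P^{t-s})_{ij}S_{j,s}$ as the conditional expectation $\E[S_{i(s),s}\mid i(t)=i]$, where $\{i(r)\}_{r\le t}$ is the backward path with $i(t)=i$ and $\Pr(i(s)=j\mid i(t)=i)=(P^{t-s})_{ij}$---i.e.\ the walk obtained by stepping $t-s$ times under $P$ from the terminal node $i$; when $P$ is reversible with respect to its stationary measure (as for symmetric adjacency weights) this is literally the time-reversed chain, and otherwise it serves as the definition of the backward path. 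This delivers~(\ref{eq:discrete_fk}).

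The step I expect to be the main obstacle is ensuring that the full discrete operator $L$---in particular the mixed second-difference stencil encoding the interaction term $\lambda\,\partial^2/\partial\mathbf{x}\,\partial\alpha$---can be chosen with all off-diagonal entries nonnegative, which is precisely what makes $P$ stochastic and legitimizes the random-walk interpretation. The naive centered stencil for a cross derivative produces off-diagonal entries of both signs whenever $\lambda\ne0$, so a more careful construction is required: restrict $|\lambda|\le 2\sqrt{\nu_s\nu_n}$ (positive semidefiniteness of the effective diffusion tensor) and use a rotated, monotone stencil; or diagonalize the tensor by a linear change of the $(\mathbf{x},\alpha)$ coordinates before discretizing; or, in the empirical application, build $P$ directly from the observed network so that nonnegativity holds by construction. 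I would handle this by stating the discrete maximum principle (monotonicity of $L$) as a standing hypothesis, noting that it is automatic when $\lambda=0$ and, more generally, whenever $|\lambda|$ is small relative to $\sqrt{\nu_s\nu_n}$. The remaining bookkeeping---the exact exponent of $1-\kappa\Delta t$ and the number of $P$-steps---is routine and is pinned down by where the source enters the scheme, as displayed above; and letting $\Delta t\to0$ recovers the continuous representation of Theorem~\ref{thm:feynman_kac}, since $(1-\kappa\Delta t)^{t-s}\to e^{-\kappa(t-s)\Delta t}$, the $P$-walk converges to the $\mathcal{L}$-diffusion, and the Riemann sum converges to the time integral.
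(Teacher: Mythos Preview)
The paper states this proposition without proof; it appears immediately after Theorem~\ref{thm:feynman_kac} and is followed only by the remark that ``This discrete formula provides the bridge to conventional econometric methods and enables the nesting relationships developed below.'' So there is nothing to compare your argument against, and your proposal in fact supplies what the paper omits.

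On its own terms your argument is sound. The explicit scheme $\tau_{i,t+1}=(1-\kappa\Delta t)\sum_j P_{ij}(\tau_{j,t}+\Delta t\,S_{j,t})$ with $P=I+\Delta t\,L$ is a first-order-consistent discretization of~(\ref{eq:master}); the verification that $P$ is row-stochastic under a CFL restriction is correct; and the induction you sketch goes through exactly as you wrote it, delivering $\tau_{i,t}=\sum_{s=0}^{t-1}(1-\kappa\Delta t)^{t-s}\Delta t\,(P^{t-s}S_{\cdot,s})_i$, which is precisely~(\ref{eq:discrete_fk}) once $(P^{t-s})_{ij}$ is read as $\Pr(i(s)=j\mid i(t)=i)$. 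Your discussion of the mixed-derivative stencil is also well placed: the cross term $\lambda\,\partial^2/\partial\mathbf{x}\,\partial\alpha$ genuinely threatens the M-matrix property, and the remedies you list (the constraint $|\lambda|\le 2\sqrt{\nu_s\nu_n}$ with a rotated monotone stencil, or a coordinate change diagonalizing the diffusion tensor, or building $P$ directly from observed nonnegative weights as in Section~\ref{sec:empirical}) are the standard ones. Since the paper itself treats~(\ref{eq:discrete_fk}) as a heuristic bridge rather than a theorem with sharp hypotheses, stating monotonicity of $L$ as a standing assumption is entirely appropriate here.
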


This discrete formula provides the bridge to conventional econometric methods and enables the nesting relationships developed below.

\subsection{Adoption Amplification Factor}

The Feynman-Kac representation motivates a natural measure of technology leadership.

\begin{definition}[Adoption Amplification Factor]
\label{def:amplification}
For an institution at location $(\mathbf{x}_i, \alpha_i)$, the Adoption Amplification Factor is:
\begin{equation}
\mathcal{A}_i = \frac{\int_0^\infty \int_{\mathcal{X}} \int_{\mathcal{N}} \tau(\mathbf{x}, \alpha, t) \, d\alpha \, d\mathbf{x} \, dt}{\int_0^\infty \tau(\mathbf{x}_i, \alpha_i, t) \, dt}
\label{eq:amplification_def}
\end{equation}
measuring the ratio of total system-wide adoption to direct adoption at institution $i$ following a localized shock at $i$.
\end{definition}

An amplification factor of $\mathcal{A}_i = 10$ means that total system-wide adoption following a shock at institution $i$ is ten times larger than direct adoption at $i$ alone---the remaining nine-tenths represent spillovers along paths of economic linkage through space and network.

\begin{proposition}[Channel Decomposition]
\label{prop:decomposition}
The Adoption Amplification Factor decomposes as:
\begin{equation}
\mathcal{A}_i = 1 + \mathcal{A}_i^{\text{spatial}} + \mathcal{A}_i^{\text{network}} + \mathcal{A}_i^{\text{interaction}}
\label{eq:decomposition}
\end{equation}
where $\mathcal{A}_i^{\text{spatial}}$ captures spillovers through geographic proximity, $\mathcal{A}_i^{\text{network}}$ captures spillovers through business relationships, and $\mathcal{A}_i^{\text{interaction}}$ captures amplification from coincident proximity.
\end{proposition}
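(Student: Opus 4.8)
The plan is to read the amplification factor through the Feynman--Kac representation (Theorem~\ref{thm:feynman_kac}) and its discrete analog (Proposition~\ref{prop:discrete_fk}): the response to a localized shock at $i$ is a Green's function, and the decomposition will come from partitioning that Green's function over the paths that generate it. I would work in the discrete formulation of Definition~\ref{def:amplification}, where the space--time integrals become finite sums and ``total system adoption'' is unambiguous. Discretizing the master equation \eqref{eq:master} replaces the differential operator by a matrix $M = \nu_s \mathbf{W} + \nu_n \mathbf{G} + \lambda\,(\mathbf{W}\!\circ\!\mathbf{G})$, where $\mathbf{W}$ is the spatial (geographic) weight operator, $\mathbf{G}$ the network adjacency operator, and $\mathbf{W}\!\circ\!\mathbf{G}$ the coincident-proximity operator supported on pairs that are neighbors on both dimensions. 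For a unit impulse $S = e_i$ with $\tau_0 = 0$, the stationary response is $\tau = (\kappa I - M)^{-1} e_i$, so the numerator of $\mathcal{A}_i$ is $\mathbf{1}^\top(\kappa I - M)^{-1} e_i$ and the denominator is $e_i^\top(\kappa I - M)^{-1} e_i$; this is exactly the discrete Feynman--Kac sum of Proposition~\ref{prop:discrete_fk} aggregated over endpoints.

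Next I would expand $(\kappa I - M)^{-1} = \kappa^{-1}\sum_{k\ge 0}\kappa^{-k} M^k$ (Neumann series, convergent under the stability condition $\rho(M) < \kappa$ already implicit in the Feynman--Kac representation) and interpret $\kappa^{-(k+1)}(M^k)_{ji}$ as the discounted weight of length-$k$ paths from $i$ to $j$. Since $M$ is a sum of three operators, every term in $M^k$ corresponds to a path each of whose steps is tagged spatial ($\mathbf{W}$), network ($\mathbf{G}$), or interaction ($\mathbf{W}\!\circ\!\mathbf{G}$). I would partition the paths out of $i$ into four classes: the length-$0$ path, contributing the leading $1$ after normalization; paths using only spatial steps; paths using only network steps; and all remaining paths, i.e.\ those using at least one interaction step or mixing spatial and network steps. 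Summing the corresponding sub-sums over all endpoints $j$ and dividing by the own-response $e_i^\top(\kappa I - M)^{-1}e_i$ \emph{defines} $\mathcal{A}_i^{\text{spatial}}$, $\mathcal{A}_i^{\text{network}}$, and $\mathcal{A}_i^{\text{interaction}}$; linearity of $M$ guarantees that the four pieces reconstruct the full numerator, which is \eqref{eq:decomposition}. The continuous statement follows identically from the Dyson--Duhamel expansion of the semigroup $e^{t(\mathcal{L}-\kappa)}$ with $\mathcal{L} = \mathcal{L}_s + \mathcal{L}_n + \mathcal{L}_{sn}$, organized by which sub-generator each factor comes from; mass conservation of $\mathcal{L}$ on the unbounded domain is precisely why one wants the discretized (or bounded-domain) version for the numerator to be the natural object.

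I expect the main obstacle to be that additivity into exactly three spillover channels is not forced by the algebra: mixed paths that alternate spatial and network steps are intrinsically ``both,'' so the real content of the proposition is the claim that the economically meaningful grouping is \{pure spatial, pure network, residual\}, with the residual capturing precisely the value created by coincident proximity. I would discharge this by (i) verifying that $\mathcal{A}_i^{\text{interaction}}$ vanishes when $\lambda = 0$ \emph{and} the spatial and network neighborhoods of every unit are disjoint---so it is not an artifact of the bookkeeping---and (ii) recording the leading-order reading, in which $\mathcal{A}_i^{\text{spatial}} \approx (\nu_s/\kappa)\sum_j \mathbf{W}_{ij}$, $\mathcal{A}_i^{\text{network}} \approx (\nu_n/\kappa)\sum_j \mathbf{G}_{ij}$, and $\mathcal{A}_i^{\text{interaction}} \approx (\lambda/\kappa)\sum_j (\mathbf{W}\!\circ\!\mathbf{G})_{ij}$ to first order in $(\nu_s,\nu_n,\lambda)/\kappa$, which makes the identification of each channel transparent. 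The only genuinely technical point is convergence of the path expansion, which reduces to the spectral condition $\rho(M) < \kappa$ needed for the Feynman--Kac solution to be well defined in the first place.
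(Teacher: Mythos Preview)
The paper states this proposition without proof; the components $\mathcal{A}_i^{\text{spatial}}$, $\mathcal{A}_i^{\text{network}}$, $\mathcal{A}_i^{\text{interaction}}$ are never defined elsewhere, so the proposition is effectively definitional and the paper treats it as such. Your proposal is therefore not a competitor to the paper's argument so much as a reconstruction of what a proof would have to be, and your overall strategy---Neumann expansion of the resolvent and a path decomposition by operator type---is the natural one and more explicit than anything in the paper.

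There is, however, a genuine gap in your identification of the leading ``$1$''. You claim it comes from the length-$0$ path in the Neumann series. It does not: the $k=0$ term contributes $\kappa^{-1}$ to the numerator $\mathbf{1}^\top(\kappa I - M)^{-1}e_i$, but the denominator
\[
D \;=\; e_i^\top(\kappa I - M)^{-1}e_i \;=\; \kappa^{-1} + \sum_{k\ge 1}\kappa^{-(k+1)}(M^k)_{ii}
\]
includes all closed walks returning to $i$, so $\kappa^{-1}/D < 1$ whenever any such walk exists. The correct source of the ``$1$'' is the $j=i$ term in the numerator sum $\sum_j[(\kappa I - M)^{-1}]_{ji}$: that term is exactly $D$, and $D/D = 1$. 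The spillover piece $\sum_{j\ne i}[(\kappa I - M)^{-1}]_{ji}/D$ is then what gets decomposed by channel. Once you separate own-location from other-location contributions \emph{first}, and only afterwards expand the off-diagonal resolvent entries by path type, your argument goes through and yields exactly the definitional reading you correctly flag. Your leading-order formulas in point~(ii) are likewise only correct under the approximation $D\approx\kappa^{-1}$, which you should state explicitly.
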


This decomposition reveals which transmission channel contributes most to each institution's role as a technology leader, informing targeted policy interventions.

\subsection{Connection to Conventional Models}

The master equation framework nests several canonical models as special cases. This section establishes these connections through explicit mathematical identification.

\paragraph{Katz-Shapiro Network Externalities.}

\citet{katz1985network} model network externalities through the utility function $u_i = v(n) - p$ where $v(n)$ is the value of adoption when $n$ others have adopted, with $v'(n) > 0$ capturing the positive externality.

\begin{proposition}[Katz-Shapiro as Discrete Network Steady State]
\label{prop:katz_shapiro}
At steady state ($\partial \tau / \partial t = 0$) with discrete network structure and no spatial diffusion ($\nu_s = 0$), the master equation yields:
\begin{equation}
\kappa \tau_i = \nu_n \sum_j G_{ij}(\tau_j - \tau_i) + S_i
\label{eq:discrete_network_ss}
\end{equation}
Rearranging gives:
\begin{equation}
\tau_i = \frac{1}{\kappa + \nu_n d_i} \left( S_i + \nu_n \sum_j G_{ij} \tau_j \right)
\label{eq:katz_shapiro_equiv}
\end{equation}
where $d_i = \sum_j G_{ij}$ is node degree. This corresponds to the Katz-Shapiro equilibrium with network externality $v(n_i) = \nu_n \sum_j G_{ij} \tau_j / \kappa$.
\end{proposition}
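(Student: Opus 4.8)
The plan is to specialize the master equation (\ref{eq:master}) to the stated regime, carry out a short rearrangement, and close with the structural identification. First I would impose the three restrictions: steady state ($\partial\tau/\partial t = 0$); no spatial channel ($\nu_s = 0$, and correspondingly the interaction term $\lambda\,\partial^2\tau/\partial\mathbf{x}\,\partial\alpha$ drops, since in a pure network model $\tau$ depends only on $\alpha$ and $\partial\tau/\partial\mathbf{x}=0$), leaving $0 = \nu_n\,\partial^2\tau/\partial\alpha^2 - \kappa\tau + S$. The one substantive step is the discretization of the network coordinate: when $\alpha$ is replaced by a finite graph with symmetric adjacency matrix $G=(G_{ij})$, the natural discrete analog of the one-dimensional diffusion operator $\partial^2/\partial\alpha^2$ is the negative graph Laplacian, acting as $(\partial^2\tau/\partial\alpha^2)_i \mapsto \sum_j G_{ij}(\tau_j-\tau_i) = -(L\tau)_i$, with $L = D - G$, $D = \mathrm{diag}(d_i)$, and $d_i = \sum_j G_{ij}$. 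I would justify this both by the standard second-difference-along-edges heuristic and by noting that $L$ is precisely the operator whose quadratic form $\tfrac12\sum_{i,j}G_{ij}(\tau_i-\tau_j)^2$ is the discrete Dirichlet energy matching the $\tfrac{\nu_n}{2}(\partial\tau/\partial\alpha)^2$ term in the adjustment cost functional of the variational derivation above. Substituting yields $0 = \nu_n\sum_j G_{ij}(\tau_j-\tau_i) - \kappa\tau_i + S_i$, which is (\ref{eq:discrete_network_ss}).

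The rearrangement is then routine: expanding $\sum_j G_{ij}(\tau_j-\tau_i) = \sum_j G_{ij}\tau_j - d_i\tau_i$ and collecting the coefficient of $\tau_i$ gives $(\kappa + \nu_n d_i)\tau_i = S_i + \nu_n\sum_j G_{ij}\tau_j$; dividing by $\kappa + \nu_n d_i > 0$ delivers (\ref{eq:katz_shapiro_equiv}). I would also record that in matrix form the system reads $(\kappa I + \nu_n L)\tau = S$, and since $L$ is symmetric positive semidefinite while $\kappa > 0$, the coefficient matrix is positive definite; hence the steady state exists and is unique, so (\ref{eq:katz_shapiro_equiv}) genuinely characterizes a well-defined fixed point rather than merely a necessary condition.

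For the identification with \citet{katz1985network}, the observation is that (\ref{eq:katz_shapiro_equiv}) has exactly the best-response structure of their model: own adoption $\tau_i$ is increasing in the weighted count of neighbors' adoption $\sum_j G_{ij}\tau_j$ --- the installed base relevant to $i$ --- with the stand-alone incentive entering through $S_i$. I would make the correspondence explicit by reading off $v(n_i) = \nu_n\sum_j G_{ij}\tau_j/\kappa$ as the network-value term, $1/(\kappa + \nu_n d_i)$ as the degree-adjusted pass-through from perceived network value to own adoption, and $S_i/\kappa$ as the stand-alone value net of price, so that $\tau = (\kappa I + \nu_n L)^{-1}S$ is the fulfilled-expectations equilibrium of Katz--Shapiro on the graph $G$.

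I expect the main obstacle to be this last step rather than the algebra. The discretization is standard and the rearrangement elementary, but making the Katz--Shapiro correspondence more than cosmetic requires care about what ``equilibrium'' means in their setting --- a marginal-consumer / fulfilled-expectations condition, typically with discrete adoption and a continuum of consumer types, rather than a smooth linear fixed point. My intention is therefore to state the identification at the level of the best-response map and the externality function $v(\cdot)$, not as a literal isomorphism of primitives; the honest claim, and what the proposition asserts, is that the discrete network steady state reproduces the Katz--Shapiro externality structure under the stated identification.
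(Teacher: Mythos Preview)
Your proposal is correct and follows the same route the paper takes: the paper provides no separate proof for this proposition---the derivation is embedded in the statement itself (impose $\partial\tau/\partial t=0$ and $\nu_s=0$, replace $\partial^2/\partial\alpha^2$ by the graph Laplacian, then rearrange). Your additions---the Dirichlet-energy justification for the Laplacian discretization, the existence/uniqueness argument via positive definiteness of $\kappa I + \nu_n L$, and the careful caveat about what the Katz--Shapiro identification does and does not claim---go beyond what the paper offers and are all sound.
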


\paragraph{Frankel-Pauzner Dynamic Coordination.}

\citet{frankel2000resolving} show that when agents choose between two actions with payoffs depending on the fraction choosing each action, and the payoff-relevant parameter follows Brownian motion, a unique equilibrium emerges.

\begin{proposition}[Frankel-Pauzner as Aggregate Limit]
\label{prop:frankel_pauzner}
When spatial and network coordinates collapse to a single dimension, the master equation reduces to:
\begin{equation}
\frac{d\bar{\tau}}{dt} = -\kappa \bar{\tau} + \bar{S}(t) + \nu \cdot \bar{\tau}
\label{eq:aggregate_dynamics}
\end{equation}
where $\bar{\tau}$ is aggregate adoption, $\bar{S}$ is aggregate shock, and $\nu = \nu_s + \nu_n$. This corresponds to the Frankel-Pauzner dynamics with strategic complementarity parameter $\nu/\kappa$.
\end{proposition}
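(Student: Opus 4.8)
The plan is to obtain the stated ODE by integrating the master equation \eqref{eq:master} over the spatial-network domain and then identifying what the second-order operators contribute once that domain is collapsed to a single aggregate coordinate. Define the aggregate adoption $\bar\tau(t) = \frac{1}{|\Omega|\,|\mathcal{N}|}\int_\Omega\int_{\mathcal{N}}\tau(\mathbf{x},\alpha,t)\,d\alpha\,d\mathbf{x}$ and the aggregate shock $\bar S(t)$ analogously. Because \eqref{eq:master} is linear, applying $\int_\Omega\int_{\mathcal{N}}(\cdot)\,d\alpha\,d\mathbf{x}$ to both sides handles most terms immediately: the time derivative commutes with the integral and yields $d\bar\tau/dt$ (up to normalization), the reaction term yields $-\kappa\bar\tau$, and the source term yields $\bar S(t)$. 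All of the substance is in the three transport terms $\nu_s\nabla^2\tau$, $\nu_n\,\partial^2\tau/\partial\alpha^2$, and $\lambda\,\partial^2\tau/\partial\mathbf{x}\,\partial\alpha$.

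For the transport terms I would proceed in two steps. First, note that on the spatially extended domain each of these is in divergence form, so under no-flux (Neumann) or periodic boundary conditions their integrals are pure boundary fluxes; keeping the full spatial structure one recovers only $d\bar\tau/dt = -\kappa\bar\tau + \bar S$, i.e.\ no complementarity appears. The content of the proposition is therefore the \emph{collapse} step: replacing local spatial and network spreading by mean-field coupling to the aggregate. Concretely I would invoke the discrete network representation from Proposition~\ref{prop:katz_shapiro}, in which $\nu_n\,\partial^2\tau/\partial\alpha^2$ corresponds to $\nu_n\sum_j G_{ij}(\tau_j-\tau_i)$ and, by symmetry, $\nu_s\nabla^2\tau$ to an analogous spatial term; taking the network to be fully mixed, so that every unit's neighborhood average coincides with $\bar\tau$ and the cross term $\lambda\,\partial^2\tau/\partial\mathbf{x}\,\partial\alpha$ drops out once the two coordinates are identified, the neighbor-adoption contribution aggregates to exactly $(\nu_s+\nu_n)\bar\tau = \nu\bar\tau$. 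Assembling the pieces gives $d\bar\tau/dt = -\kappa\bar\tau + \bar S(t) + \nu\bar\tau$.

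Finally I would make the Frankel-Pauzner identification explicit by rewriting the ODE as $d\bar\tau/dt = -\kappa\bigl(1-\nu/\kappa\bigr)\bar\tau + \bar S(t)$: the adoption decision then exhibits strategic complementarity of strength $\nu/\kappa$, since the constant-source fixed point $\bar\tau^\star=(\bar S/\kappa)\sum_{k\ge0}(\nu/\kappa)^k$ carries the multiplier $1/(1-\nu/\kappa)$. The regime $\nu/\kappa<1$ delivers a unique, globally stable aggregate path, while $\nu/\kappa>1$ is the positive-feedback regime in which coordination is decisive, and the stochastic aggregate fundamental $\bar S(t)$ is precisely the moving payoff parameter that, in \citet{frankel2000resolving}, sweeps the economy through the dominance regions and resolves the multiplicity.

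The main obstacle is not any of the integrations, which are routine, but justifying the collapse: under the natural boundary conditions the diffusion operators integrate to zero, so the $+\nu\bar\tau$ term cannot come from the PDE directly and arises only once local spreading is reinterpreted as mean-field reinforcement. I would therefore present the collapse as a change of modeling primitive rather than a limit of the PDE on a shrinking domain, and pin down the coefficient by the discrete Feynman-Kac bookkeeping of Proposition~\ref{prop:discrete_fk}: when the network has a single recurrent aggregate state, a unit of source is revisited a geometric number of times with per-visit weight $\nu/\kappa$, and checking that this geometric sum reproduces $\nu/\kappa$ exactly—rather than, say, $\nu/(\kappa+\nu)$—is the one computation requiring care about how the own-level term is partitioned between $\kappa$ and the Laplacian.
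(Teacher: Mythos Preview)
The paper states this proposition without proof, so there is no original argument to compare against; your proposal already supplies more than the paper does. You have correctly identified the central difficulty: under any standard reading of ``collapse'' as integration over $\Omega\times\mathcal{N}$, the operators $\nu_s\nabla^2$, $\nu_n\,\partial_\alpha^2$, and $\lambda\,\partial_{\mathbf{x}}\partial_\alpha$ are in divergence form and integrate to boundary fluxes, so with no-flux or periodic boundaries one obtains only $d\bar\tau/dt=-\kappa\bar\tau+\bar S$ and the $+\nu\bar\tau$ term is absent. Your conclusion that the complementarity must be supplied by a reinterpretation of the diffusion terms as mean-field reinforcement, rather than by a limit of the PDE, is sound and is more candid than the paper's own phrasing.

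The remaining gap is in the mean-field step. Even in the fully-mixed discrete network, the graph Laplacian $\sum_j G_{ij}(\tau_j-\tau_i)$ still averages to zero across $i$, so it does not aggregate to $(\nu_s+\nu_n)\bar\tau$ unless the $-\tau_i$ piece is detached and absorbed into a redefined decay rate---precisely the ambiguity you flag in your last paragraph. That ambiguity is not resolved by the discrete Feynman-Kac bookkeeping you invoke: the same partition choice reappears there as whether the per-step survival weight is $1-\kappa\Delta t$ or $1-(\kappa+\nu)\Delta t$, and hence whether the complementarity parameter is $\nu/\kappa$ or $\nu/(\kappa+\nu)$. The honest statement is that \eqref{eq:aggregate_dynamics} is a \emph{parameter identification} between two modeling primitives rather than a limit theorem for \eqref{eq:master}, with the specific value $\nu/\kappa$ fixed by a normalization convention the paper leaves implicit. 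Your write-up would be cleaner if it led with that framing and committed explicitly to one partition of the own-level term, rather than presenting a derivation that the PDE structure does not in fact support.
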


\paragraph{Guimaraes-Machado-Pereira Timing Frictions.}

\citet{guimaraes2020dynamic} develop a framework where agents receive Poisson opportunities to revise actions at rate $\lambda$. The state evolution satisfies:
\begin{equation}
\frac{dn}{dt} = \lambda \cdot [F(\theta, n) - n]
\label{eq:gmp_dynamics}
\end{equation}

\begin{proposition}[Timing Friction Correspondence]
\label{prop:gmp}
The decay parameter $\kappa$ in the master equation corresponds exactly to the Poisson revision rate $\lambda$ in \citet{guimaraes2020dynamic}:
\begin{equation}
\kappa = \lambda
\label{eq:timing_friction}
\end{equation}
The discrete Feynman-Kac formula (\ref{eq:discrete_fk}) with time step $\Delta t$ yields dynamics matching Guimaraes-Machado-Pereira with $\lambda = \kappa$.
\end{proposition}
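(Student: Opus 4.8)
The plan is to establish the correspondence in two steps: first matching the continuous-time dynamics, then matching the discrete recursion, with the rate identification $\kappa = \lambda$ emerging in each case as the unique coefficient governing the speed of relaxation toward a best-response target.

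First I would take the aggregate form of the master equation from Proposition \ref{prop:frankel_pauzner}, namely $d\bar\tau/dt = -\kappa\bar\tau + \bar S(t) + \nu\bar\tau$ with $\nu = \nu_s + \nu_n$, and rewrite it in relaxation form $d\bar\tau/dt = \kappa\,(\bar\tau^\star - \bar\tau)$, where the effective target is $\bar\tau^\star = \kappa^{-1}\big(\bar S(t) + \nu\bar\tau\big)$---the aggregate adoption that is an instantaneous fixed point given the current fundamental $\bar S$ and the complementarity loading $\nu$. Placing this beside the Guimaraes--Machado--Pereira law of motion $dn/dt = \lambda\,[F(\theta,n) - n]$ from (\ref{eq:gmp_dynamics}), I identify $n \leftrightarrow \bar\tau$ and $F(\theta,n) \leftrightarrow \bar\tau^\star$; since in both equations the right-hand side is a single scalar multiple of the gap between the state and its target, the two dynamics coincide if and only if the multiples agree, i.e.\ $\kappa = \lambda$, which is (\ref{eq:timing_friction}).

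Next I would verify the discrete claim by discretizing GMP with an Euler step $\Delta t$: $n_{t+\Delta t} = (1-\lambda\Delta t)\,n_t + \lambda\Delta t\,F(\theta,n_t)$, and unrolling the recursion to obtain $n_t = (1-\lambda\Delta t)^{t/\Delta t}\,n_0 + \sum_s (1-\lambda\Delta t)^{(t-s)/\Delta t - 1}\,\lambda\Delta t\, F(\theta,n_s)$. Comparing this with the discrete Feynman--Kac representation (\ref{eq:discrete_fk}), whose summands carry exactly the geometrically decaying weights $(1-\kappa\Delta t)^{t-s}$, the two weight sequences agree term by term precisely when $\kappa = \lambda$ (the unit shift in the exponent is an indexing convention that is immaterial as $\Delta t \to 0$ and does not affect the rate). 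The remaining identification is that the per-period input $\E[S_{i(s),s}\mid i(t)=i]\,\Delta t$ in the Feynman--Kac sum plays the role of the best-response increment $\lambda\Delta t\, F(\theta, n_s)$ once the spatial and network coordinates are collapsed, which follows from the aggregation above together with $\bar S + \nu\bar\tau = \kappa\,\bar\tau^\star$. I would also record the probabilistic reading that confirms the match: $(1-\kappa\Delta t)^{t-s}$ is the survival probability of receiving no adjustment opportunity over $[s,t]$ in the master-equation dynamics, and this is the same no-arrival probability of the Poisson revision clock in GMP exactly when that clock runs at rate $\kappa$.

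The main obstacle is not the rate identification itself---which is forced by matching the relaxation coefficient---but making the target identification $F(\theta,n) \leftrightarrow \bar\tau^\star$ precise: GMP's best response $F$ is in general a nonlinear function of $n$ encoding strategic complementarity, whereas the aggregated master equation produces a target that is affine in $\bar\tau$ through the term $\nu\bar\tau/\kappa$. I would address this by noting that the correspondence is exact for the linearized (or linear-best-response) version of GMP, with $\nu/\kappa$ serving as the slope of $F$ in $n$---consistent with the strategic-complementarity parameter identified in Proposition \ref{prop:frankel_pauzner}---and that the identification $\kappa = \lambda$ is robust to this linearization, since it concerns only the coefficient multiplying the state-to-target gap and not the shape of the target map.
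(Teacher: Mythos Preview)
The paper states this proposition without proof---only interpretive remarks follow about the half-life $\ln(2)/\kappa$---so your proposal supplies content the paper does not. Your argument is essentially correct: rewriting the aggregate master equation as $d\bar\tau/dt = \kappa(\bar\tau^\star - \bar\tau)$ and matching against $dn/dt = \lambda[F(\theta,n)-n]$ forces $\kappa=\lambda$ as the unique gap coefficient, and the discrete comparison via the geometric weights $(1-\kappa\Delta t)^{t-s}$, read as Poisson no-arrival probabilities, is the right mechanism. Your linearity caveat---that the master equation's target $\bar\tau^\star$ is affine in $\bar\tau$ while GMP's $F$ need not be, so the full correspondence holds for the linearized best response---is a genuine qualification that the paper's bare statement glosses over. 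The one loose step is identifying the per-period input in the discrete Feynman--Kac sum with $\lambda\Delta t\,F$: formula (\ref{eq:discrete_fk}) carries only the source $S$ in the summand, with spillovers entering through the path expectation $\E[\,\cdot\mid i(t)=i]$, so in the aggregate limit you should state explicitly that the path structure degenerates and the $\nu\bar\tau$ contribution is thereby repackaged into the effective source $\kappa\bar\tau^\star$. With that clarification the argument is complete and more careful than what the paper itself offers.
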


This identification has important implications. The timing friction $\lambda^{-1}$---the expected waiting time until revision---equals $\kappa^{-1}$ in the master equation. The half-life of adoption effects is $t_{1/2} = \ln(2)/\kappa$, independent of observation frequency.

\paragraph{Adoption Hazard Models.}

Standard duration models specify hazard rate $h(t|X_i) = h_0(t) \exp(X_i'\beta)$.

\begin{proposition}[Hazard Models as No-Spillover Limit]
\label{prop:hazard}
When $\nu_s = \nu_n = 0$, the master equation implies:
\begin{equation}
\frac{d\tau_i}{dt} = -\kappa \tau_i + S_i(t)
\label{eq:no_spillover}
\end{equation}
corresponding to independent adoption with hazard $h_i = \kappa + S_i$. The no-spillover case is a testable restriction.
\end{proposition}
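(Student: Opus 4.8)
The plan is to specialize the master equation of Definition~\ref{def:master} to the degenerate case, read off the resulting dynamics, cross-check against the Feynman--Kac representation, and finally interpret the limit as a proportional hazards specification. First I would set $\nu_s = \nu_n = 0$ directly in (\ref{eq:master}). A point worth flagging is that the interaction term cannot survive on its own: for the generator $\mathcal{L} = \nu_s \nabla^2 + \nu_n \partial^2/\partial\alpha^2 + \lambda\,\partial^2/\partial\mathbf{x}\partial\alpha$ to be a legitimate (positive semidefinite) diffusion generator its symbol must be nonnegative, which forces $|\lambda| \le 2\sqrt{\nu_s \nu_n}$; hence $\nu_s = \nu_n = 0$ implies $\lambda = 0$ as well. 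With every second-order operator gone, (\ref{eq:master}) decouples across coordinates into the pointwise linear ODE $\partial\tau/\partial t = -\kappa\tau + S$. Applying the discretization underlying Proposition~\ref{prop:discrete_fk} (units $i = 1,\dots,N$) then yields (\ref{eq:no_spillover}), $d\tau_i/dt = -\kappa\tau_i + S_i(t)$, with no coupling to $\tau_j$ for $j\neq i$.

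Second I would cross-check this against Theorem~\ref{thm:feynman_kac}. When $\mathcal{L}=0$ the diffusion $(X_s,A_s)$ degenerates to the constant path $(X_s,A_s)\equiv(\mathbf{x},\alpha)$, so the expectation in (\ref{eq:feynman_kac}) collapses to the deterministic Duhamel formula
\begin{equation}
\tau_i(t) = e^{-\kappa t}\tau_{i,0} + \int_0^t e^{-\kappa(t-s)} S_i(s)\,ds,
\end{equation}
which is exactly the solution of (\ref{eq:no_spillover}). This also shows that every off-diagonal transmission path carries zero weight, so in Proposition~\ref{prop:decomposition} one has $\mathcal{A}_i^{\text{spatial}} = \mathcal{A}_i^{\text{network}} = \mathcal{A}_i^{\text{interaction}} = 0$ and $\mathcal{A}_i = 1$: a localized shock produces no system-wide amplification, which is precisely the ``no-spillover'' content.

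Third I would make the hazard correspondence precise. Reading $\tau_i(t)$ as an adoption propensity and passing to a small-$\Delta t$ description, the probability that institution $i$ transitions in $[t,t+\Delta t)$ is $(\kappa + S_i(t))\,\Delta t + o(\Delta t)$: the constant $\kappa$ is the intrinsic revision/adjustment intensity (the Poisson rate identified in Proposition~\ref{prop:gmp}) and $S_i(t)$ is the idiosyncratic incentive, so the instantaneous hazard is $h_i(t) = \kappa + S_i(t)$, which matches $h(t\mid X_i) = h_0(t)\exp(X_i'\beta)$ with baseline $h_0 = \kappa$ once the covariate index is mapped into $S_i$. I would then record the empirical payoff: because the no-spillover case is the boundary $\nu_s = \nu_n = 0$ of a model that otherwise generates nonzero $\mathcal{A}_i^{\text{spatial}}$ and $\mathcal{A}_i^{\text{network}}$, it is a genuinely testable restriction---a discretized regression of $i$'s adoption on the adoption of its geographic neighbors and network partners, conditional on $i$'s own covariates, should show insignificant spatial and network lag coefficients if and only if the restriction holds.

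The step I expect to be the main obstacle is the third one: $\tau_i$ in the master equation is an intensity, not a survival probability, so the phrase ``corresponds to hazard $h_i = \kappa + S_i$'' requires the right bookkeeping---either the small-$\Delta t$ transition-probability argument above, or equivalently identifying $\kappa t + \int_0^t S_i(s)\,ds$ with a cumulative hazard---and being explicit that the proportional-hazards functional form is a modeling choice for $S_i$ rather than a consequence of the PDE itself. The first two steps are essentially mechanical given the earlier results.
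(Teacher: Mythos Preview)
Your proposal is correct and, in fact, more complete than what the paper provides. The paper states Proposition~\ref{prop:hazard} without proof, treating the reduction as immediate: set $\nu_s = \nu_n = 0$ in the master equation and read off the decoupled ODE. Your argument supplies the details the paper omits, and adds two observations the paper does not make explicit. First, your point that the interaction coefficient $\lambda$ is forced to zero by the ellipticity constraint $|\lambda| \le 2\sqrt{\nu_s\nu_n}$ is a genuine gap in the paper's statement---the proposition as written only zeroes $\nu_s$ and $\nu_n$, but the mixed derivative in (\ref{eq:master}) would otherwise survive, so your remark is needed for the reduction to go through cleanly. Second, your Feynman--Kac cross-check (the process degenerates to a constant path, giving the Duhamel formula and $\mathcal{A}_i = 1$) is a nice consistency verification absent from the paper. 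Your caution in the third step---that the hazard identification $h_i = \kappa + S_i$ is an interpretive correspondence rather than a derivation, and that the proportional-hazards form is a modeling choice for $S_i$---is also well placed; the paper's one-line claim ``corresponding to independent adoption with hazard $h_i = \kappa + S_i$'' is informal, and your small-$\Delta t$ transition-probability reading is a reasonable way to make it precise.
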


Table \ref{tab:nesting} summarizes these relationships.

\begin{table}[H]
\centering
\caption{Conventional Models as Special Cases of the Master Equation}
\label{tab:nesting}
\begin{threeparttable}
\begin{tabular}{lccp{5cm}}
\toprule
Model & $\nu_s$ & $\nu_n$ & Mathematical Identification \\
\midrule
Adoption hazard models & 0 & 0 & $h_i = \kappa + S_i$ \\
Katz-Shapiro (1985) & 0 & $>0$ & $v(n_i) = \nu_n \sum_j G_{ij}\tau_j / \kappa$ \\
Frankel-Pauzner (2000) & \multicolumn{2}{c}{$\nu_s + \nu_n > 0$} & Complementarity $= (\nu_s + \nu_n)/\kappa$ \\
Guimaraes et al. (2020) & $\geq 0$ & $\geq 0$ & Timing friction $\lambda = \kappa$ \\
\midrule
Spatial-network (full) & $>0$ & $>0$ & All parameters free \\
\bottomrule
\end{tabular}
\begin{tablenotes}
\small
\item \textit{Notes:} Each conventional model emerges through parameter restrictions and discretization. The full model generalizes all approaches.
\end{tablenotes}
\end{threeparttable}
\end{table}

\subsection{L\'{e}vy Extension: Critical Mass and Cascade Dynamics}
\label{sec:levy}

The baseline framework describes continuous adoption transmission appropriate for gradual diffusion regimes. To capture the sudden adoption cascades that occur when critical mass is reached---where adoption spreads rapidly through positive feedback rather than gradual diffusion---I extend the framework to incorporate jumps through L\'{e}vy processes.

\paragraph{Jump-Diffusion Dynamics.}

The extended dynamics replace pure diffusion with a jump-diffusion process:
\begin{equation}
\frac{\partial \tau}{\partial t} = \nu_s \nabla^2 \tau + \nu_n \frac{\partial^2 \tau}{\partial \alpha^2} + \lambda \frac{\partial^2 \tau}{\partial \mathbf{x} \partial \alpha} - \kappa \tau + S + \mathcal{J}[\tau]
\label{eq:levy_master}
\end{equation}
where the jump operator $\mathcal{J}[\tau]$ captures sudden adoption events distinct from gradual diffusion, defined by
\begin{equation}
\mathcal{J}[\tau] = \int_{\R} \left[\tau(\mathbf{x}, \alpha + z, t) - \tau(\mathbf{x}, \alpha, t) - z \frac{\partial \tau}{\partial \alpha} \ind_{|z|<1}\right] \nu(dz)
\label{eq:jump_operator}
\end{equation}
Here $\nu(dz)$ is the L\'{e}vy measure characterizing jump intensity and size distribution. The compensator term $z \partial\tau/\partial\alpha \cdot \ind_{|z|<1}$ ensures the integral is well-defined for L\'{e}vy measures with infinite activity near zero.

In the technology adoption context, jumps represent sudden adoption events distinct from gradual diffusion. When adoption reaches critical mass, network effects trigger rapid cascades---institutions observe successful adoption by counterparties and revise their own adoption decisions discretely rather than incrementally. The L\'{e}vy measure $\nu(dz)$ captures both how frequently such cascade events occur (total mass of $\nu$) and the distribution of adoption spillover magnitudes when they occur (shape of $\nu$).

For a compound Poisson process with intensity $\lambda_J$ and jump size distribution $F$, the L\'{e}vy measure is $\nu(dz) = \lambda_J dF(z)$, and the jump operator simplifies to
\begin{equation}
\mathcal{J}[\tau] = \lambda_J \int_{\R} \left[\tau(\mathbf{x}, \alpha + z, t) - \tau(\mathbf{x}, \alpha, t)\right] dF(z) = \lambda_J \left(\E[\tau(\mathbf{x}, \alpha + Z, t)] - \tau(\mathbf{x}, \alpha, t)\right)
\label{eq:compound_poisson}
\end{equation}
where $Z \sim F$ represents the random jump size. This has intuitive interpretation: at rate $\lambda_J$, an institution's adoption intensity jumps by an amount determined by the cascade spillover from adopting counterparties at network distance $Z$.

\paragraph{State-Dependent Jump Intensity.}

The key innovation capturing critical mass dynamics makes jump intensity depend on the current adoption level:
\begin{equation}
\lambda_J(\tau) = \lambda_0 + (\lambda_1 - \lambda_0) \cdot H(\tau - \bar{\tau}^*)
\label{eq:state_dependent}
\end{equation}
where $H(\cdot)$ is the Heaviside function, $\bar{\tau}^*$ is the critical mass threshold, $\lambda_0$ is baseline jump intensity (gradual adoption regime), and $\lambda_1 \gg \lambda_0$ is elevated intensity above threshold (cascade regime).

The state-dependent specification captures the central insight of the coordination literature: below critical mass, adoption proceeds gradually as institutions weigh costs and benefits individually; above critical mass, positive feedback accelerates adoption as network effects dominate. The threshold $\bar{\tau}^*$ corresponds to the tipping point in \citet{arthur1989competing} and the critical mass in \citet{katz1985network}.

\begin{proposition}[Cascade Limit]
\label{prop:cascade}
In the limit $\lambda_0 \to 0$ and $\lambda_1 \to \infty$, the dynamics converge to deterministic cascade dynamics: below threshold, only diffusive transmission occurs; above threshold, immediate cascade with adoption spreading instantaneously to all connected institutions.
\end{proposition}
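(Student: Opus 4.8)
The plan is to treat the double limit as a singular perturbation of the jump--diffusion master equation (\ref{eq:levy_master}), splitting the analysis at the moving threshold set. First I would establish the a priori bounds needed to control the jump operator: arguing as in Theorem~\ref{thm:feynman_kac}, but with the compensated generator now carrying the compound-Poisson part, a maximum-principle and probabilistic-representation argument yields $0 \le \tau^{(\lambda_0,\lambda_1)} \le C$ uniformly in the parameters, with $C$ depending only on $\|\tau_0\|_\infty$, $\|S\|_\infty$, and $T$. Given this bound, on the sub-threshold region $\mathcal{R}_-(t) = \{(\mathbf{x},\alpha) : \tau(\mathbf{x},\alpha,t) < \bar{\tau}^*\}$ we have $\lambda_J(\tau) = \lambda_0$, so by (\ref{eq:compound_poisson}) the jump term satisfies $\|\mathcal{J}[\tau]\|_{L^\infty(\mathcal{R}_-)} \le 2 \lambda_0 C \to 0$; a standard stability estimate for linear parabolic equations (Duhamel together with Gr\"onwall) then gives $\tau \to \tau^{\mathrm{diff}}$ locally uniformly, where $\tau^{\mathrm{diff}}$ solves the pure-diffusion master equation (\ref{eq:master}). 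This is the first assertion: below threshold only diffusive transmission survives.

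The core of the argument is the behaviour on the super-threshold region $\mathcal{R}_+(t) = \{\tau \ge \bar{\tau}^*\}$, where $\lambda_J(\tau) = \lambda_1 \to \infty$ and the equation becomes stiff. I would introduce the fast time $s = \lambda_1 t$ and the nonlocal network-averaging operator $\mathcal{M}[\tau](\mathbf{x},\alpha) = \E[\tau(\mathbf{x},\alpha+Z)] - \tau(\mathbf{x},\alpha)$ --- the generator of the pure-jump random walk on $\mathcal{N}$ with step law $F$ --- so that on $\mathcal{R}_+$ equation (\ref{eq:levy_master}) reads $\partial_s \tau = \mathcal{M}[\tau] + \lambda_1^{-1}\,(\nu_s \nabla^2 \tau + \nu_n \partial_\alpha^2 \tau + \lambda\, \partial^2_{\mathbf{x}\alpha} \tau - \kappa \tau + S)$. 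To leading order in $\lambda_1^{-1}$ the fast flow is $\partial_s \tau = \mathcal{M}[\tau]$; since $\mathcal{M}$ is dissipative with $\ker \mathcal{M}$ equal to the functions constant in $\alpha$ along each $F$-irreducible component of $\mathcal{N}$, the contraction semigroup $e^{s\mathcal{M}}$ converges as $s \to \infty$ to the conditional-expectation projection $\Pi$ onto $\ker \mathcal{M}$, provided $\mathcal{M}$ has a spectral gap on that component --- which I would obtain from irreducibility of $F$, e.g. $F$ admitting a density bounded below near the origin. Hence, instantaneously in the original time $t$ as $\lambda_1 \to \infty$, $\tau$ on $\mathcal{R}_+$ is replaced by $\Pi\tau$, its average over the network-connected component; matched asymptotics (inner solution on the $s$-scale, outer solution on the $t$-scale, joined through $\Pi$) identifies the limiting dynamics: the instant a positive-measure set crosses $\bar{\tau}^*$, adoption is redistributed across its entire $F$-connected component, which is precisely ``immediate cascade to all connected institutions,'' while the residual slow evolution on the manifold $\ker \mathcal{M}$ is driven by the projected diffusion, decay, and source terms. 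Letting $\lambda_0 \to 0$ at the same time removes the only jump activity below threshold, so the two regimes decouple exactly as in Proposition~\ref{prop:cascade}.

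The step I expect to be the main obstacle is making this singular limit rigorous \emph{across the free boundary} $\{\tau = \bar{\tau}^*\}$, where two difficulties compound. First, the Heaviside dependence of $\lambda_J$ on $\tau$ in (\ref{eq:state_dependent}) makes the right-hand side of (\ref{eq:levy_master}) discontinuous in the state, so solutions must be read in a relaxed sense --- a Filippov differential inclusion, or a viscosity formulation of the associated free-boundary (Stefan-type) problem --- and one must rule out fine oscillation of $\tau$ about the threshold. Second, the fast relaxation on $\mathcal{R}_+$ can itself enlarge $\mathcal{R}_+$: if $\Pi\tau$ exceeds $\bar{\tau}^*$ on a strictly larger set, the threshold set jumps outward, and iterating yields a bootstrap-percolation-type propagation whose fixed point must be identified with the full $F$-connected component --- this is where ``spreads instantaneously to all connected institutions'' acquires content, and where irreducibility of $F$ is essential. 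A fully rigorous treatment would therefore route through a well-posedness theory for the limiting free-boundary problem and a convergence-of-free-boundaries argument; for present purposes I would prove the limit under the simplifying hypotheses that $\tau_0$ and $S$ are continuous, that $F$ is irreducible with a spectral gap on each component, and that the threshold crossing is transversal in $t$, which together exclude the pathological oscillations and deliver the stated dichotomy between the diffusive and cascade regimes.
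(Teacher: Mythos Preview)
Your proposal is correct and shares the paper's two-region split, but the level of argument is quite different. The paper's own proof is a two-sentence heuristic: for $\bar{\tau}<\bar{\tau}^*$ it simply observes that $\lambda_J=\lambda_0\to 0$ annihilates the jump term, leaving (\ref{eq:master}); for $\bar{\tau}\ge\bar{\tau}^*$ it asserts that $\lambda_1\to\infty$ makes the jump term dominant and ``forces instantaneous equilibration,'' i.e.\ $\tau(\mathbf{x},\alpha+z,t)=\tau(\mathbf{x},\alpha,t)$ for all $z$ in the support of $F$, hence uniformity across network-connected institutions. There is no fast-time rescaling, no spectral-gap argument, no discussion of the free boundary, and no explicit identification of the limiting projection; the paper treats the statement as a formal/interpretive limit rather than a theorem requiring analytic justification. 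Your singular-perturbation program---uniform a priori bounds, Duhamel--Gr\"onwall stability below threshold, fast variable $s=\lambda_1 t$ with the averaging generator $\mathcal{M}$ and convergence of $e^{s\mathcal{M}}$ to the projection $\Pi$ onto $\ker\mathcal{M}$, and the Filippov/free-boundary discussion at $\{\tau=\bar{\tau}^*\}$---is the machinery one would actually need to make the paper's assertion rigorous, and it correctly surfaces hypotheses (irreducibility of $F$, spectral gap, transversal threshold crossing) that the paper leaves implicit. What the paper's approach buys is brevity appropriate to an economics setting where the proposition functions as an interpretive bridge to cascade models; what yours buys is an honest account of where the analytical work lies, particularly the bootstrap-percolation enlargement of $\mathcal{R}_+$ that gives ``all connected institutions'' its content.
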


\begin{proof}
Consider the jump-diffusion dynamics with state-dependent intensity. For $\bar{\tau} < \bar{\tau}^*$, $\lambda_J(\tau) = \lambda_0 \to 0$, so the jump term vanishes and dynamics reduce to pure diffusion:
\begin{equation}
\frac{\partial \tau}{\partial t} = \nu_s \nabla^2 \tau + \nu_n \frac{\partial^2 \tau}{\partial \alpha^2} + \lambda \frac{\partial^2 \tau}{\partial \mathbf{x} \partial \alpha} - \kappa \tau + S
\end{equation}
For $\bar{\tau} \geq \bar{\tau}^*$, $\lambda_J(\tau) = \lambda_1 \to \infty$. In this limit, the jump term dominates and forces instantaneous equilibration: $\tau(\mathbf{x}, \alpha + z, t) = \tau(\mathbf{x}, \alpha, t)$ for all $z$ in the support of $F$, implying uniform adoption across network-connected institutions. This is precisely the cascade outcome where adoption spreads immediately upon crossing threshold.
\end{proof}

This nesting relationship clarifies that continuous diffusion and discrete cascades describe different regimes of the same phenomenon. Diffusion captures pre-critical-mass dynamics (gradual demonstration effects, incremental learning), while jumps capture discrete adoption cascades when critical mass materializes. The framework thus unifies the gradual diffusion models of \citet{guimaraes2020dynamic} with the tipping point dynamics emphasized in \citet{katz1985network} and \citet{arthur1989competing}.

\paragraph{Feynman-Kac Representation with Jumps.}

The Feynman-Kac representation extends to the L\'{e}vy case by replacing the diffusion process with a jump-diffusion:

\begin{theorem}[L\'{e}vy-Feynman-Kac Representation]
\label{thm:levy_fk}
The solution to the L\'{e}vy-extended master equation (\ref{eq:levy_master}) admits the representation:
\begin{equation}
\tau(\mathbf{x}, \alpha, t) = \E_{(\mathbf{x}, \alpha)}\left[ e^{-\kappa t} \tau_0(X_t, A_t) + \int_0^t e^{-\kappa(t-s)} S(X_s, A_s, s) \, ds \right]
\label{eq:levy_feynman_kac}
\end{equation}
where $(X_s, A_s)_{s \geq 0}$ is now a L\'{e}vy process combining diffusion with jumps governed by measure $\nu(dz)$.
\end{theorem}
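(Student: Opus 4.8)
The plan is to reproduce the exponential-tilting argument that established Theorem~\ref{thm:feynman_kac}, with the It\^o--L\'{e}vy formula replacing the classical It\^o formula. First I would set $u(\mathbf{x},\alpha,t) = e^{\kappa t}\tau(\mathbf{x},\alpha,t)$. Since the diffusion generator $\mathcal{L} = \nu_s\nabla^2 + \nu_n\,\partial^2/\partial\alpha^2 + \lambda\,\partial^2/\partial\mathbf{x}\partial\alpha$ and the jump operator $\mathcal{J}$ of \eqref{eq:jump_operator} are both linear, multiplication by $e^{\kappa t}$ commutes with them, and $\partial_t(e^{\kappa t}\tau) = e^{\kappa t}(\partial_t\tau + \kappa\tau)$ absorbs the $-\kappa\tau$ term. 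Substituting into \eqref{eq:levy_master} then yields the forced integro-differential equation $\partial u/\partial t = (\mathcal{L} + \mathcal{J})u + e^{\kappa t}S$ with initial datum $u(\cdot,\cdot,0) = \tau_0$, which is the jump analog of the transformed equation in the proof of Theorem~\ref{thm:feynman_kac}.

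Second, I would identify $\mathcal{L} + \mathcal{J}$ as the infinitesimal generator of the jump-diffusion $(X_s,A_s)$. By the L\'{e}vy--Khintchine characterization, the operator $f \mapsto \mathcal{L}f + \int_{\R}[\,f(\mathbf{x},\alpha+z) - f(\mathbf{x},\alpha) - z\,\partial_\alpha f(\mathbf{x},\alpha)\,\ind_{|z|<1}\,]\,\nu(dz)$ is exactly the generator of the Markov process whose continuous component has diffusion structure encoded by $(\nu_s,\nu_n,\lambda)$ and whose network-coordinate jumps have L\'{e}vy measure $\nu(dz)$. The compensator term $z\,\partial_\alpha\tau\,\ind_{|z|<1}$ that appears in \eqref{eq:jump_operator} is precisely the correction that renders this generator well defined when $\nu$ has infinite activity at the origin, so the matching between the PDE operator and the process generator is exact rather than approximate.

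Third, I would invoke Feynman--Kac for parabolic integro-differential equations. Concretely, apply the It\^o--L\'{e}vy formula to $M_s := u(X_s,A_s,t-s) + \int_0^s e^{\kappa(t-r)}S(X_r,A_r,t-r)\,dr$ on $s \in [0,t]$: the predictable finite-variation part cancels because $u$ solves the equation of the first step, leaving a local martingale built from the Brownian integrals and the integral against the compensated Poisson random measure. Taking $\E_{(\mathbf{x},\alpha)}[\,\cdot\,]$ of the identity $M_t = M_0$ reproduces $u(\mathbf{x},\alpha,t) = \E_{(\mathbf{x},\alpha)}[\tau_0(X_t,A_t) + \int_0^t e^{\kappa s}S(X_s,A_s,s)\,ds]$, exactly as in the black-box step of the proof of Theorem~\ref{thm:feynman_kac} but now with jumps in the underlying process. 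Undoing the tilt via $\tau = e^{-\kappa t}u$ and carrying the factor $e^{-\kappa t}$ inside the expectation converts $e^{\kappa s-\kappa t}$ into $e^{-\kappa(t-s)}$ in the source integral, which is \eqref{eq:levy_feynman_kac}.

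The main obstacle is not the algebra but the verification that the local martingale in the third step is a true martingale and that the expectations are finite --- conditions automatic for pure diffusions but requiring hypotheses in the jump setting. The natural standing assumptions are: $\int_{\R}\min(1,z^2)\,\nu(dz) < \infty$ (so $(X_s,A_s)$ is a well-defined L\'{e}vy process, with $\int_{|z|\ge 1}|z|\,\nu(dz) < \infty$ if one wants the compensator active only near the origin); at most polynomial growth of $\tau_0$ and $S$ in the network coordinate, so that $\E[\,|\tau_0(X_t,A_t)|\,] + \E\int_0^t |S(X_s,A_s,s)|\,ds < \infty$ and the local martingale is genuine; and enough regularity of $\tau$ --- a classical $C^{1,2}$ solution, or a mild/viscosity solution obtained as a limit of smooth ones --- for the It\^o--L\'{e}vy formula to apply. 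I would state these explicitly and note that in the compound-Poisson specialization \eqref{eq:compound_poisson}, with finite intensity $\lambda_J$ and jump law $F$ of finite mean, every one of them holds trivially, so the representation is unconditional in the empirically relevant regime. I would also remark that under the state-dependent intensity \eqref{eq:state_dependent} the equation becomes nonlinear and $(X_s,A_s)$ is no longer L\'{e}vy but still Markov, with generator $\mathcal{L} + \lambda_J(\tau)\int_{\R}[\tau(\mathbf{x},\alpha+z,t) - \tau(\mathbf{x},\alpha,t)]\,dF(z)$; the same argument yields the representation along the self-consistent paths of that process, which is the sense in which \eqref{eq:levy_feynman_kac} is to be read in the cascade regime.
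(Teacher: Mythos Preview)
The paper does not actually supply a proof of Theorem~\ref{thm:levy_fk}; it states the representation and moves directly to economic interpretation, implicitly treating it as the jump analog of Theorem~\ref{thm:feynman_kac}. Your proposal is therefore not competing with an existing argument but filling a gap, and it does so in the way the paper would have: the exponential tilt $u=e^{\kappa t}\tau$, identification of $\mathcal{L}+\mathcal{J}$ with the generator of the jump-diffusion, and an appeal to Feynman--Kac are exactly the steps the paper uses in the diffusion case, and you correctly replace the classical It\^o formula by the It\^o--L\'{e}vy formula.

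Your proposal in fact goes beyond the paper in useful ways. The explicit martingale verification via $M_s$, the list of standing assumptions ($\int\min(1,z^2)\,\nu(dz)<\infty$, polynomial growth of $\tau_0$ and $S$, $C^{1,2}$ regularity), and the observation that these are automatic in the compound-Poisson case \eqref{eq:compound_poisson} are all absent from the paper, which treats the representation as a black box. Your final remark that the state-dependent intensity \eqref{eq:state_dependent} makes the equation nonlinear and the process non-L\'{e}vy but still Markov is a point the paper elides entirely; flagging it is appropriate. One minor caution: in your third step the time argument of $S$ in the source integral comes out as $t-r$ from the martingale computation, whereas the paper's stated formula has $S(X_s,A_s,s)$; reconciling these requires either time-homogeneity of $S$ or a change of variables that you should make explicit, since the paper itself is loose on this point in both theorems.
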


The economic interpretation remains: adoption intensity equals expected cumulative exposure to adoption shocks along all paths through spatial-network space, but paths now include both continuous diffusion and discrete jumps. The jump component captures cascade pathways where adoption spreads instantaneously through network links when critical mass is reached.

\paragraph{Temporary Interventions and Critical Mass.}

The L\'{e}vy extension provides rigorous foundations for analyzing when temporary interventions produce permanent adoption shifts.

\begin{definition}[Intervention Intensity]
\label{def:intensity}
The intervention intensity function $I: [0, T] \to \R_+$ measures the rate at which the policy shock affects adoption at each instant $s \in [0, T]$:
\begin{equation}
I(s) = \int_{\mathcal{X}} \int_{\mathcal{N}} S(\mathbf{x}, \alpha, s) \, d\alpha \, d\mathbf{x}
\label{eq:intervention_intensity}
\end{equation}
representing spatially and network-integrated shock intensity at time $s$.
\end{definition}

\begin{corollary}[Temporary Intervention Threshold]
\label{cor:temporary}
Consider a temporary intervention of duration $T$ with time-varying intensity $I(s)$ as defined in Definition \ref{def:intensity}. Let $\mathcal{A}(s)$ denote the time-varying system-wide amplification factor at time $s$. The intervention produces permanent adoption gains if and only if cumulative amplified effects exceed the critical mass gap:
\begin{equation}
\int_0^T I(s) \cdot \mathcal{A}(s) \, ds > \bar{\tau}^* - \bar{\tau}_0
\label{eq:temporary_threshold}
\end{equation}
where $\bar{\tau}_0$ is initial average adoption and the left-hand side represents cumulative amplified intervention effects.
\end{corollary}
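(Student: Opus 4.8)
The plan is to reduce the spatial--network dynamics to a scalar law for aggregate adoption $\bar\tau(t)$, use the (L\'evy--)Feynman--Kac representation to identify the cumulative \emph{amplified} intervention effect with the integral on the left of (\ref{eq:temporary_threshold}), and then show that permanence of the resulting gain is equivalent to whether $\bar\tau(t)$ is driven across the critical threshold $\bar\tau^*$ before the intervention expires. The argument splits into a deterministic-PDE half, which produces the aggregate balance law and the amplification identification, and a regime-switching half, which converts ``$\bar\tau$ crosses $\bar\tau^*$'' into ``the gain is permanent.''

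First I would integrate the L\'evy-extended master equation (\ref{eq:levy_master}) over $\mathcal{X}\times\mathcal{N}$ and reduce it, as in the aggregate limit (\ref{eq:aggregate_dynamics}), to a scalar dynamics of the form $\dot{\bar\tau} = -(\kappa-\nu)\bar\tau + I(t)$ while $\bar\tau<\bar\tau^*$, where $\nu=\nu_s+\nu_n$, $I(t)$ is the integrated source of Definition \ref{def:intensity}, and the divergence-form diffusion and compensated-jump terms contribute only boundary fluxes that vanish under no-flux (Neumann) conditions on $\partial\mathcal{X}$ and $\partial\mathcal{N}$ (below threshold $\lambda_J\equiv\lambda_0$, so jumps merely redistribute adoption across network positions without changing the aggregate). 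The Green's function of this linear ODE multiplies each instantaneous injection $I(s)$ by a factor that I identify with $\mathcal{A}(s)$ --- the time-varying, system-wide counterpart of the Adoption Amplification Factor of Definition \ref{def:amplification}, equivalently the total-to-direct ratio delivered by the Feynman--Kac path integral of Theorems \ref{thm:feynman_kac} and \ref{thm:levy_fk} applied to a source concentrated according to $S(\cdot,\cdot,s)$. Under the appropriate normalization this yields that the aggregate adoption attained at the end of the intervention is $\bar\tau_0 + \int_0^T I(s)\,\mathcal{A}(s)\,ds$ up to an exponentially decaying transient.

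For sufficiency, suppose $\int_0^T I(s)\,\mathcal{A}(s)\,ds > \bar\tau^* - \bar\tau_0$. Then there is a first time $t^\ast\le T$ with $\bar\tau(t^\ast)=\bar\tau^*$; at $t^\ast$ the state-dependent intensity (\ref{eq:state_dependent}) switches from $\lambda_0$ to $\lambda_1\gg\lambda_0$, and by the Cascade Limit (Proposition \ref{prop:cascade}) the jump term drives adoption toward the high coordination equilibrium. Since that equilibrium itself lies above $\bar\tau^*$ --- the defining feature of the critical-mass regime in \citet{katz1985network} and \citet{arthur1989competing} --- the system stays above threshold after the intervention ends, $\lambda_J$ remains at $\lambda_1$, and the gain is permanent. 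For necessity I argue the contrapositive: if $\int_0^T I(s)\,\mathcal{A}(s)\,ds \le \bar\tau^* - \bar\tau_0$, then (discount factors only shrink the response) $\sup_{t\le T}\bar\tau(t) \le \bar\tau_0 + \int_0^T I(s)\,\mathcal{A}(s)\,ds \le \bar\tau^*$, so the cascade regime is never entered; for $t>T$ we have $I\equiv 0$, the aggregate law reads $\dot{\bar\tau}=-(\kappa-\nu)\bar\tau<0$, and $\bar\tau(t)$ returns to its baseline, leaving no permanent gain.

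The main obstacle is the middle step: making rigorous the identification of $\int_0^T I(s)\,\mathcal{A}(s)\,ds$ with the attained aggregate level requires extending Definition \ref{def:amplification}, stated for a single fixed localized shock, to time-varying system-wide shocks, and --- crucially --- fixing the normalization of $\mathcal{A}(s)$ so that cumulative pressure injections compare dimensionally and quantitatively with the level gap $\bar\tau^*-\bar\tau_0$ (equivalently, specifying in what sense ``permanent'' coincides with ``peak,'' e.g. in a slowly-decaying or $\kappa\to\nu$ normalization). The regime-switching half is comparatively soft: it relies on the Heaviside specification (\ref{eq:state_dependent}) together with the structural hypothesis that the post-cascade coordination equilibrium exceeds $\bar\tau^*$, which should be stated as an assumption rather than derived. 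The vanishing of boundary fluxes and the conservation of the aggregate under the compensated jump operator are routine once the boundary conditions are fixed.
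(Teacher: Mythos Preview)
Your approach is essentially the paper's: reduce to aggregate dynamics via the Feynman--Kac representation, identify the amplified shift as $\int_0^T e^{-\kappa(T-s)} I(s)\mathcal{A}(s)\,ds$, and convert threshold crossing into permanence using the two-regime structure and the bound $e^{-\kappa(T-s)}\le 1$. You are actually more thorough than the paper, which only argues sufficiency explicitly (its proof ends with ``gives the sufficient condition''), whereas you also supply the contrapositive for necessity and correctly flag the normalization of $\mathcal{A}(s)$ as the soft spot the paper glosses over. One minor slip: integrating the master equation over $\mathcal{X}\times\mathcal{N}$ with Neumann conditions annihilates the diffusion terms entirely (they are pure divergences), so the aggregate decay rate is $\kappa$, not $\kappa-\nu$; the $\nu\bar\tau$ term in Proposition~\ref{prop:frankel_pauzner} arises from \emph{collapsing} coordinates rather than integrating over them, though this does not affect your threshold argument.
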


\begin{proof}
Under the L\'{e}vy dynamics with state-dependent jump intensity (\ref{eq:state_dependent}), the system exhibits two stable regimes: a low-adoption equilibrium with $\bar{\tau} < \bar{\tau}^*$ and a high-adoption equilibrium with $\bar{\tau} > \bar{\tau}^*$. The intervention shifts average adoption by:
\begin{equation}
\Delta\bar{\tau}(T) = \int_0^T e^{-\kappa(T-s)} I(s) \cdot \mathcal{A}(s) \, ds
\end{equation}
following from the Feynman-Kac representation integrated over space and network. For permanent effects, the intervention must push adoption above threshold before time $T$: $\bar{\tau}_0 + \Delta\bar{\tau}(T) > \bar{\tau}^*$. Rearranging and noting that $e^{-\kappa(T-s)} \leq 1$ gives the sufficient condition (\ref{eq:temporary_threshold}).
\end{proof}

The condition has intuitive content: the cumulative intervention, weighted by the amplification factor at each instant, must exceed the gap between initial adoption and critical mass. Larger, more concentrated interventions are more likely to cross the threshold than equivalent total resources spread thinly over time. This rationalizes the findings in \citet{crouzet2023shocks} regarding India's demonetization: the large but temporary shock pushed digital wallet adoption above critical mass in high-exposure regions, producing persistent increases even after cash availability normalized. In low-exposure regions, the shock fell short of the threshold, and adoption gains dissipated.

\paragraph{Two-Regime Dynamics.}

The L\'{e}vy extension generates qualitatively different dynamics in the two regimes:

\begin{proposition}[Two-Regime Characterization]
\label{prop:two_regime}
Under the state-dependent L\'{e}vy dynamics:
\begin{enumerate}
\item[(i)] \textit{Below threshold} ($\bar{\tau} < \bar{\tau}^*$): Adoption evolves through gradual diffusion with characteristic time scale $\kappa^{-1}$. The half-life of adoption responses is $t_{1/2} = \ln(2)/\kappa$, and spatial-network spillovers spread at rates governed by $\nu_s$ and $\nu_n$.
\item[(ii)] \textit{Above threshold} ($\bar{\tau} \geq \bar{\tau}^*$): Jump intensity increases to $\lambda_1$, generating rapid cascade dynamics. The characteristic time scale becomes $\lambda_1^{-1} \ll \kappa^{-1}$, and adoption spreads through discrete jumps rather than continuous diffusion.
\item[(iii)] \textit{Transition dynamics}: Near threshold, the system exhibits critical slowing---small perturbations produce large, long-lasting responses as the system approaches the bifurcation point.
\end{enumerate}
\end{proposition}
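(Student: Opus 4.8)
The plan is to handle the three parts separately. Parts (i) and (ii) are direct consequences of inspecting the two branches of the state-dependent intensity (\ref{eq:state_dependent}) inside the L\'evy--Feynman--Kac representation (Theorem~\ref{thm:levy_fk}) together with the Cascade Limit (Proposition~\ref{prop:cascade}); part (iii) requires a genuine bifurcation argument built on the aggregate reduction behind Proposition~\ref{prop:frankel_pauzner}.

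For part (i): below threshold $H(\bar\tau - \bar\tau^*) = 0$, so $\lambda_J(\tau) = \lambda_0$, and in the regime of interest $\lambda_0$ is small (the $\lambda_0 \to 0$ branch of Proposition~\ref{prop:cascade}), so the jump operator in (\ref{eq:levy_master}) is negligible and the dynamics reduce to the pure-diffusion master equation (\ref{eq:master}). Applying Theorem~\ref{thm:feynman_kac}, the impulse response carries the prefactor $e^{-\kappa t}$; solving $e^{-\kappa t_{1/2}} = \tfrac12$ gives $t_{1/2} = \ln(2)/\kappa$ and identifies $\kappa^{-1}$ as the characteristic relaxation scale. For the spatial--network spreading rates, the path process $(X_s, A_s)$ in (\ref{eq:feynman_kac}) has generator $\mathcal L = \nu_s \nabla^2 + \nu_n \partial^2/\partial\alpha^2 + \lambda\,\partial^2/\partial\mathbf x\partial\alpha$, so $\Var(X_s - \mathbf x) \sim 2\nu_s s$ and $\Var(A_s - \alpha) \sim 2\nu_n s$; the support of the response widens at rates set by $\sqrt{\nu_s}$ and $\sqrt{\nu_n}$, which is the stated dependence. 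For part (ii): above threshold $\lambda_J(\tau) = \lambda_1$, so in Theorem~\ref{thm:levy_fk} the path process acquires a compound-Poisson component of rate $\lambda_1$ with $\mathrm{Exp}(\lambda_1)$ inter-jump times, mean $\lambda_1^{-1}$. In the cascade regime $\lambda_1 \gg \kappa$ (consistent with the $\lambda_1 \to \infty$ limit of Proposition~\ref{prop:cascade} at fixed $\kappa$), so $\lambda_1^{-1} \ll \kappa^{-1}$ and jump transmission dominates diffusive relaxation; by the compound-Poisson form (\ref{eq:compound_poisson}) each arrival moves $\tau$ discretely toward the network average $\E[\tau(\mathbf x, \alpha + Z, t)]$, so propagation is through discrete jumps, the $\lambda_1 \to \infty$ endpoint being the instantaneous-equilibration cascade of Proposition~\ref{prop:cascade}.

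For part (iii): integrate (\ref{eq:levy_master}) over $(\mathbf x, \alpha)$, as in the reduction behind Proposition~\ref{prop:frankel_pauzner}; the divergence-form diffusion and interaction terms vanish under the boundary conditions, leaving a scalar equation $\dot{\bar\tau} = g(\bar\tau)$ in which the jump contribution enters only through $H(\bar\tau - \bar\tau^*)$. I would regularize the Heaviside by a sigmoid $H_\varepsilon$ of width $\varepsilon$ centered at $\bar\tau^*$ (equivalently, pass to Filippov solutions), making $g$ a $C^1$ function whose low-adoption equilibrium $\bar\tau_{\mathrm{eq}}$ solves $g(\bar\tau_{\mathrm{eq}}) = 0$. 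As the configuration $(\lambda_0, \lambda_1, \bar\tau^*)$ approaches the tipping point, $g$ develops a degenerate (fold) zero, so $\bar\tau_{\mathrm{eq}} \to \bar\tau^*$ and $g'(\bar\tau_{\mathrm{eq}}) \to 0^-$. Linearizing, a perturbation relaxes as $\delta(t) = \delta(0)\,e^{g'(\bar\tau_{\mathrm{eq}})t}$, so the relaxation time $|g'(\bar\tau_{\mathrm{eq}})|^{-1}$ diverges and the stationary response to a small constant forcing $\eta$ is $\eta/|g'(\bar\tau_{\mathrm{eq}})| \to \infty$ --- precisely critical slowing, with small perturbations producing large and long-lasting responses.

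The main obstacle will be making part (iii) rigorous given the non-smooth Heaviside intensity: one must justify that the sigmoid regularization (or Filippov selection) preserves the relevant dynamics, verify that the scalar fixed-point condition genuinely exhibits a fold as the tipping configuration is approached rather than a discontinuous or transcritical crossing, and --- crucially --- confirm that the aggregate reduction remains valid near threshold, i.e.\ that the non-constant spatial--network modes relax at rates bounded away from zero (governed by $\nu_s$, $\nu_n$, and the spectral gap of $\mathcal L$) so that the slowest mode of the full system really is the scalar aggregate mode whose rate vanishes at the bifurcation. Parts (i) and (ii), by contrast, are essentially bookkeeping on top of Theorems~\ref{thm:feynman_kac} and~\ref{thm:levy_fk} together with Proposition~\ref{prop:cascade}.
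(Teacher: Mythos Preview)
The paper does not actually prove this proposition: it is stated and immediately followed by interpretive text and a remark connecting the two-regime structure to S-curve dynamics, with no proof environment. So there is no ``paper's own proof'' to compare against; you are supplying an argument where the paper offers only a heuristic characterization.

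That said, your proposal is sound and well-organized. Parts (i) and (ii) are exactly the right bookkeeping: reading off the two branches of the Heaviside in (\ref{eq:state_dependent}), reducing to (\ref{eq:master}) below threshold (with the $e^{-\kappa t}$ prefactor giving the half-life) and to the compound-Poisson form (\ref{eq:compound_poisson}) above threshold (with $\lambda_1^{-1}$ inter-jump times). Your invocation of Proposition~\ref{prop:cascade} to justify the $\lambda_0$-small / $\lambda_1$-large regime is appropriate, since the proposition as stated implicitly assumes that parameter configuration. For part (iii), your saddle-node / fold-bifurcation argument via aggregate reduction and Heaviside regularization is the standard route to critical slowing down, and you are right to flag the three technical gaps: (a) that regularization or Filippov selection preserves the dynamics, (b) that the scalar fixed point genuinely undergoes a fold rather than some other crossing, and (c) that the spatial-network modes are spectrally gapped so the aggregate mode is indeed the slowest. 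The paper makes no attempt to address any of these; the phrase ``critical slowing'' in item (iii) is asserted by analogy to bifurcation theory rather than derived. Your proposal is therefore strictly more than the paper provides, and your honest accounting of the obstacles in (iii) is accurate --- those would be real work to close.
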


This two-regime structure explains why technology adoption often exhibits S-curve dynamics: slow initial growth (below-threshold diffusion), rapid acceleration (above-threshold cascade), and eventual saturation. The framework provides microfoundations for this pattern through the state-dependent jump intensity mechanism.

\begin{remark}[Connection to Empirical Patterns]
The L\'{e}vy extension rationalizes several empirical patterns documented in the technology adoption literature. The sharp contrast between gradual pre-threshold dynamics and rapid post-threshold cascades matches the ``hockey stick'' adoption curves observed for successful technologies. The critical slowing near threshold explains why adoption often appears to stall before suddenly accelerating. The permanent effects of sufficiently large temporary interventions rationalize how coordinated industry initiatives or regulatory mandates can overcome coordination failures that market forces alone cannot resolve.
\end{remark}

\section{Monte Carlo Evidence}
\label{sec:monte_carlo}

This section presents Monte Carlo simulations validating the theoretical predictions.

\subsection{Simulation Design}

The simulations implement the discrete network formulation for networks of $N = 30$ to $40$ agents. I consider three network structures: random networks, scale-free networks, and clustered networks where agents connect preferentially to geographic neighbors. The baseline parameters are $\nu_s = 0.5$, $\nu_n = 0.8$, $\lambda = 0.3$, $\kappa = 0.15$, and critical mass threshold $\bar{\tau}^* = 0.35$.

\subsection{Results}

Figure \ref{fig:critical_mass} illustrates the two-regime dynamics predicted by Proposition \ref{prop:critical_mass}. Panel (a) compares adoption trajectories following small and large shocks. The small shock targets 5 nodes (17 percent) while the large shock targets 18 nodes (60 percent). Under the small shock, adoption rises during intervention but decays to 4.6 percent at terminal date---the shock fails to cross critical mass. Under the large shock, adoption crosses threshold and terminal adoption reaches 67.3 percent, nearly fifteen times higher. Panel (b) shows the cross-sectional distribution: the large shock produces bimodal adoption with mass near full adoption, while the small shock concentrates near zero.

\begin{figure}[H]
\centering
\includegraphics[width=\textwidth]{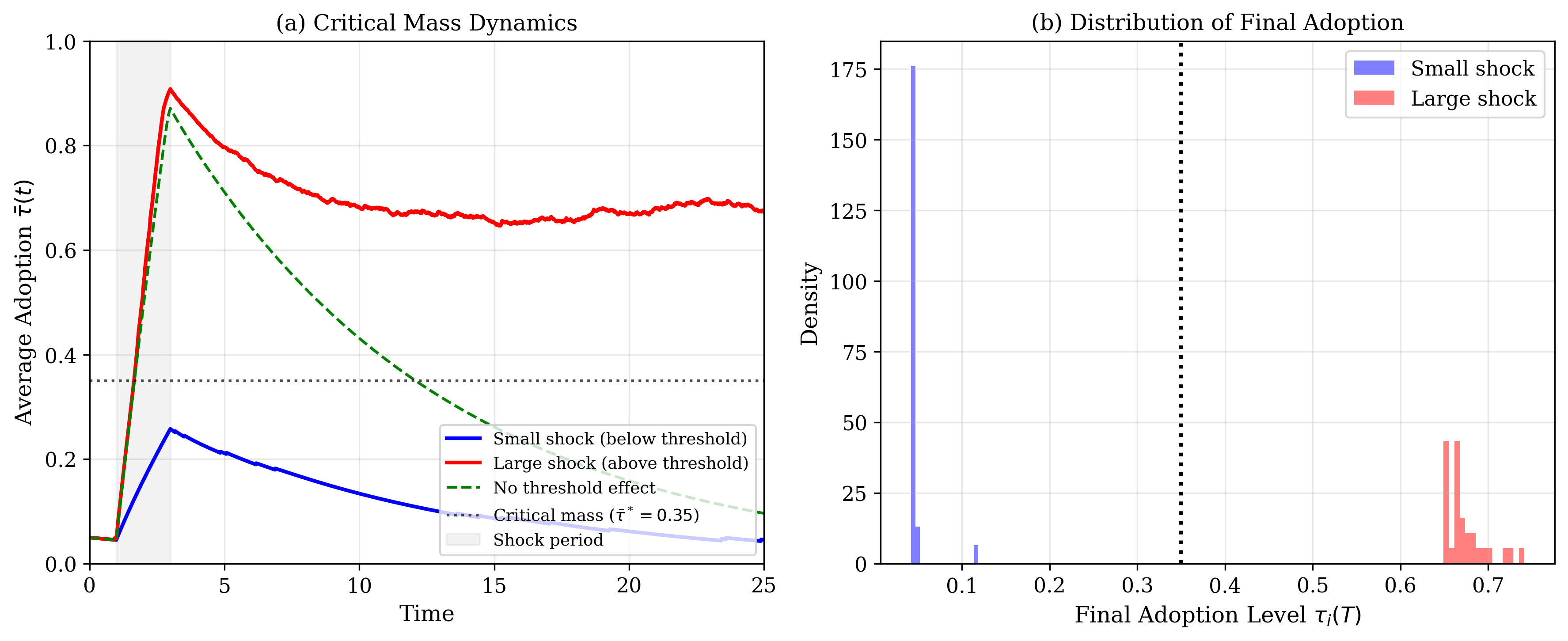}
\caption{Critical Mass Dynamics}
\label{fig:critical_mass}
\begin{minipage}{0.8\textwidth}
        \vspace{0.1cm}
        \noindent\footnotesize Notes: Panel (a) shows average adoption over time following small and large shocks. The small shock fails to reach critical mass. The large shock crosses threshold and triggers self-sustaining cascade. Panel (b) shows final adoption distribution. Parameters: $N = 30$, $\nu_s = 0.8$, $\nu_n = 1.2$, $\lambda = 0.4$, $\kappa = 0.1$, $\bar{\tau}^* = 0.35$.
\end{minipage}
\end{figure}

Figure \ref{fig:temporary} examines intervention duration effects as characterized in Corollary \ref{cor:temporary}. Short interventions ($T = 1, 2$) fail to cross threshold and produce terminal adoption of only 1.6--1.8 percent. Longer interventions ($T = 4, 7$) succeed, with terminal adoption of 37--45 percent. The sharp contrast illustrates threshold nonlinearity: resources concentrated into interventions exceeding critical duration produce permanent shifts, while equivalent resources spread below threshold have negligible permanent effects.

\begin{figure}[H]
\centering
\includegraphics[width=0.75\textwidth]{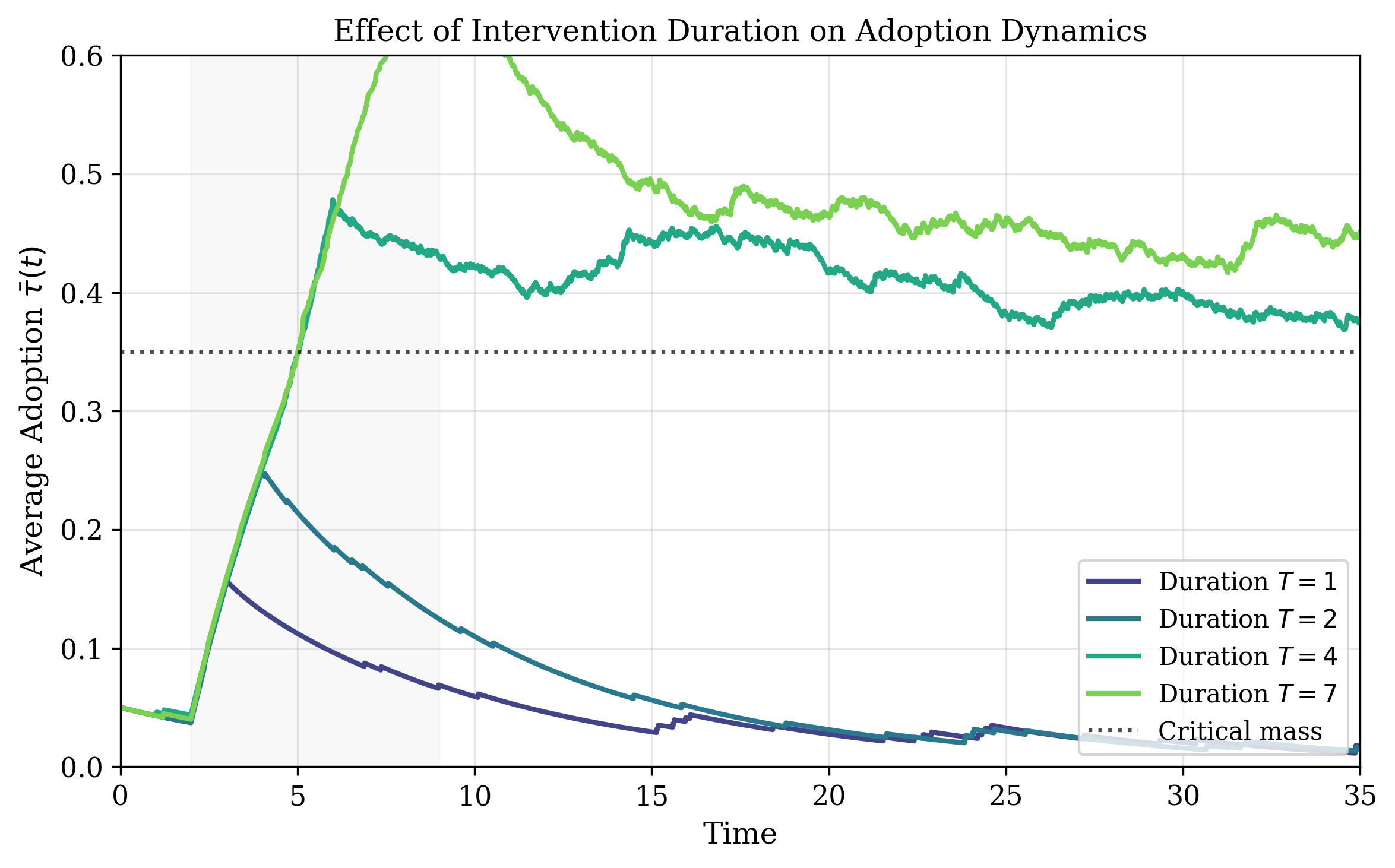}
\caption{Effect of Intervention Duration on Adoption Dynamics}
\label{fig:temporary}
\begin{minipage}{0.8\textwidth}
        \vspace{0.1cm}
        \noindent\footnotesize Notes: Short interventions ($T = 1, 2$) produce only temporary effects. Longer interventions ($T = 4, 7$) cross critical mass and produce permanent shifts. Parameters: $N = 30$, $\nu_s = 0.5$, $\nu_n = 0.8$, $\lambda = 0.3$, $\kappa = 0.15$, $\bar{\tau}^* = 0.35$.
\end{minipage}
\end{figure}

Figure \ref{fig:amplification_mc} validates the Adoption Amplification Factor. Unit shocks are applied to each node separately, and total adoption is measured at terminal date. The correlation between simulated effects and theoretical amplification factors is 0.996 ($p < 0.001$), confirming that the amplification factor accurately identifies technology leaders.

\begin{figure}[H]
\centering
\includegraphics[width=0.7\textwidth]{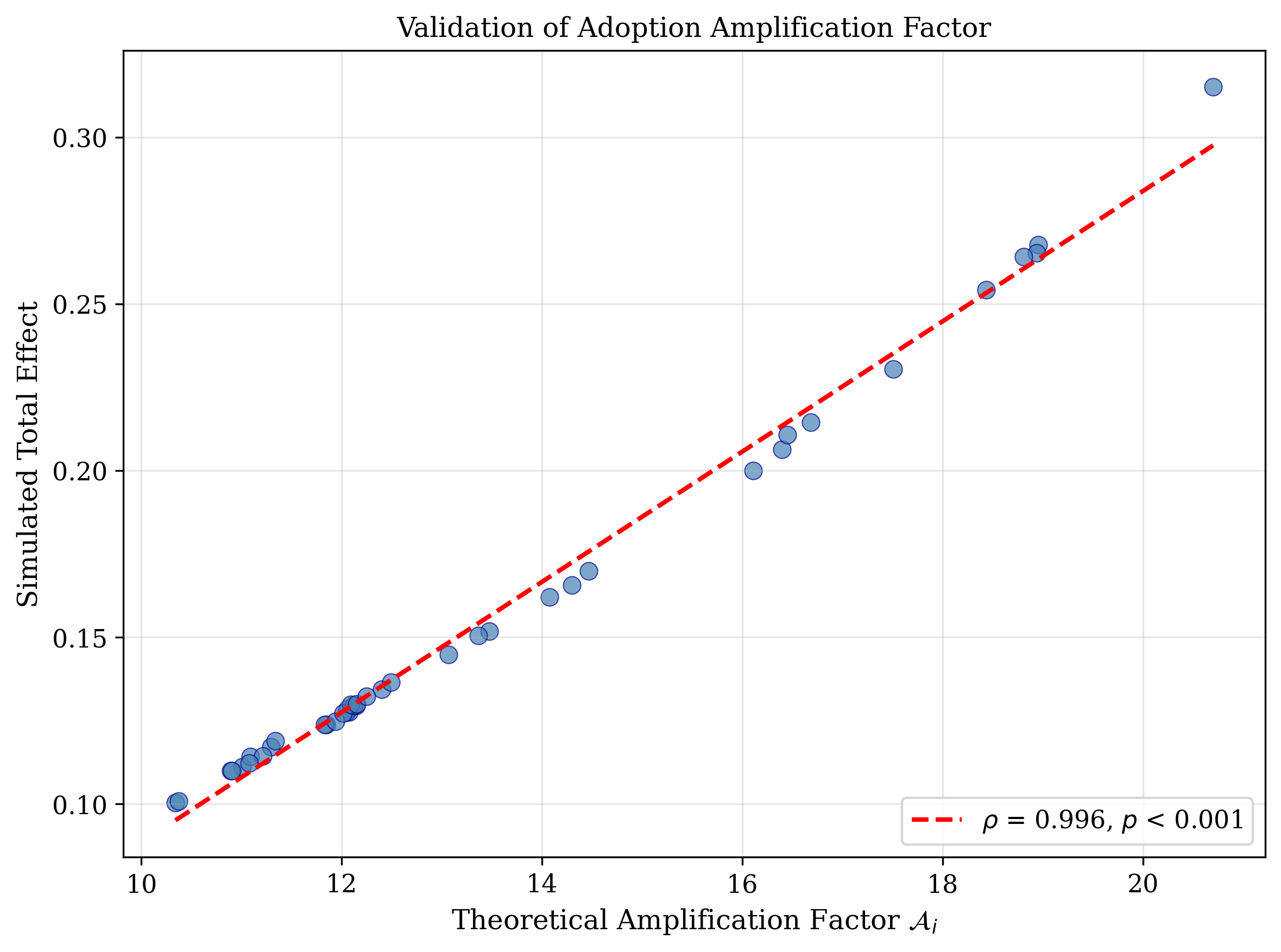}
\caption{Validation of the Adoption Amplification Factor}
\label{fig:amplification_mc}
\begin{minipage}{0.8\textwidth}
        \vspace{0.1cm}
        \noindent\footnotesize Notes: Simulated total adoption effects plotted against theoretical amplification factors. Correlation of 0.996 ($p < 0.001$) confirms that the amplification factor accurately predicts technology leadership. Parameters: $N = 40$, $\nu_s = 0.8$, $\nu_n = 1.2$, $\lambda = 0.4$, $\kappa = 0.1$.
\end{minipage}
\end{figure}

Table \ref{tab:channel} presents channel decomposition for the 15 highest-amplification nodes. Network components dominate (reflecting $\nu_n = 1.2 > \nu_s = 0.8$), and interaction terms are negative, indicating overlap between spatial and network centrality.

\begin{table}[H]
\centering
\caption{Channel Decomposition of Adoption Amplification Factor (Top 15 Nodes)}
\label{tab:channel}
\begin{threeparttable}
\begin{tabular}{cccccc}
\toprule
Rank & Node & Total $\mathcal{A}$ & Spatial & Network & Interaction \\
\midrule
1 & 3 & 20.70 & 6.63 & 16.81 & $-3.74$ \\
2 & 0 & 18.95 & 6.49 & 14.37 & $-2.91$ \\
3 & 2 & 18.94 & 6.32 & 15.33 & $-3.71$ \\
4 & 5 & 18.81 & 6.84 & 14.18 & $-3.20$ \\
5 & 6 & 18.43 & 6.73 & 13.55 & $-2.85$ \\
6 & 1 & 17.51 & 6.08 & 13.67 & $-3.24$ \\
7 & 10 & 16.69 & 6.39 & 11.47 & $-2.18$ \\
8 & 21 & 16.45 & 6.39 & 10.52 & $-1.46$ \\
9 & 4 & 16.40 & 6.60 & 11.76 & $-2.96$ \\
10 & 8 & 16.11 & 6.27 & 11.79 & $-2.95$ \\
11 & 12 & 14.46 & 6.98 & 8.38 & $-1.89$ \\
12 & 15 & 14.30 & 6.87 & 8.56 & $-2.12$ \\
13 & 9 & 14.07 & 6.96 & 8.36 & $-2.25$ \\
14 & 20 & 13.47 & 6.93 & 7.27 & $-1.73$ \\
15 & 23 & 13.37 & 6.73 & 7.16 & $-1.52$ \\
\bottomrule
\end{tabular}
\begin{tablenotes}
\small
\item \textit{Notes:} Network component dominates. Negative interaction indicates spatial-network centrality overlap.
\end{tablenotes}
\end{threeparttable}
\end{table}

\section{Empirical Application: SWIFT gpi Adoption}
\label{sec:empirical}

\subsection{Institutional Setting and Data}

SWIFT gpi (Global Payments Innovation) represents a major technological upgrade to the interbank payment messaging system. Launched on February 1, 2017, gpi offers same-day settlement, end-to-end tracking, and confirmation of credit to beneficiary accounts. The technology exhibits strong network externalities: banks benefit only if correspondent banks also adopt.

The sample consists of 17 Global Systemically Important Banks with complete data on adoption timing, CEO characteristics, and network position. The dependent variable is days from launch to adoption, ranging from 0 (founding members) to 305 days. The Amplification Factor is computed from the spatial-network framework using BIS bilateral exposure data and geographic coordinates.

\subsection{Empirical Specification}

The theoretical framework generates predictions about the relationship between network position, firm characteristics, and adoption timing. I estimate the following specification:

\begin{equation}
\text{Days}_i = \beta_0 + \beta_1 \cdot \text{CEO Age}_i + \beta_2 \cdot \mathcal{A}_i + \beta_3 \cdot \log(\text{Assets}_i) + \gamma' \mathbf{R}_i + \varepsilon_i
\label{eq:regression}
\end{equation}

where $\text{Days}_i$ is the number of days from SWIFT gpi launch (February 1, 2017) to bank $i$'s adoption date; $\text{CEO Age}_i$ is the age of bank $i$'s CEO at the time of the adoption decision; $\mathcal{A}_i$ is the Adoption Amplification Factor measuring network centrality; $\log(\text{Assets}_i)$ is log total assets controlling for firm size; and $\mathbf{R}_i$ is a vector of regional indicators (Europe, Asia-Pacific, with North America as baseline).

The theoretical predictions are: $\beta_2 < 0$ (network-central banks adopt earlier, reflecting their higher returns from adoption due to larger spillovers); $\beta_3 < 0$ (larger banks adopt earlier, reflecting greater resources and network effects); and $\beta_1 > 0$ (older CEOs adopt later, conditional on network position and size, reflecting technology hesitancy).

\subsection{Results}

Figure \ref{fig:amplification_adoption} displays the relationship between Amplification Factor and adoption timing. The correlation is strongly negative ($\rho = -0.69$, $p = 0.002$): network-central banks adopt significantly earlier. This pattern confirms the framework's central prediction that institutions with higher amplification factors---those whose adoption decisions cascade most strongly through the system---are technology leaders who adopt first.

\begin{figure}[H]
\centering
\includegraphics[width=0.8\textwidth]{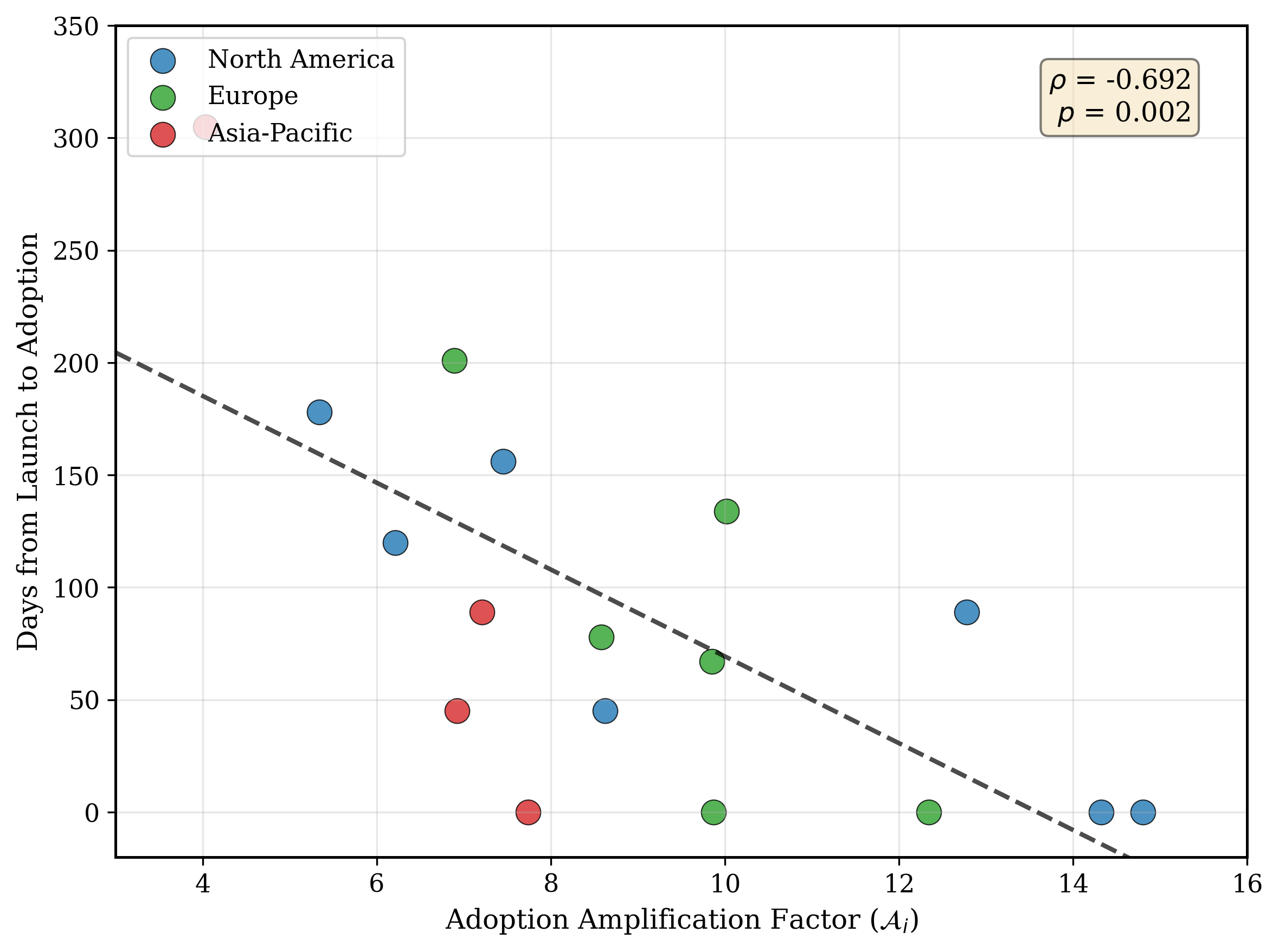}
\caption{Amplification Factor and Adoption Timing}
\label{fig:amplification_adoption}
\begin{minipage}{0.8\textwidth}
        \vspace{0.1cm}
        \noindent\footnotesize Notes: Scatter plot of Adoption Amplification Factor against days from SWIFT gpi launch to adoption. The strong negative correlation ($\rho = -0.69$, $p = 0.002$) confirms that network-central banks adopt earlier. Colors indicate regions: blue = North America, green = Europe, red = Asia-Pacific.
\end{minipage}
\end{figure}

Table \ref{tab:regression} presents regression results. Column (1) shows the baseline specification without size controls: the amplification factor is negative and marginally significant ($-15.6$ days per unit, $p = 0.07$). Column (2) adds log assets, which is strongly significant ($-195$ days per log unit, $p < 0.01$) and absorbs much of the amplification effect. Crucially, the CEO age coefficient increases from 6.0 to 15.2 days per year and becomes highly significant ($p = 0.01$) once size is controlled---firm size was a confounding variable. Columns (3)--(4) add regional controls and CEO tenure; the CEO age effect remains robust at 11--12 days per year.

\begin{table}[H]
\centering
\caption{Determinants of SWIFT gpi Adoption Timing}
\label{tab:regression}
\begin{threeparttable}
\begin{tabular}{lcccc}
\toprule
 & (1) & (2) & (3) & (4) \\
\midrule
CEO Age & 6.015 & 15.150$^{***}$ & 12.390$^{**}$ & 11.530$^{**}$ \\
 & (6.835) & (5.006) & (4.700) & (4.490) \\[0.5em]
Amplification Factor & $-$15.633$^{*}$ & $-$7.564 & $-$5.090 & $-$11.920 \\
 & (8.079) & (5.694) & (5.830) & (7.180) \\[0.5em]
Log(Total Assets) &  & $-$194.22$^{***}$ & $-$195.61$^{***}$ & $-$172.01$^{***}$ \\
 &  & (44.950) & (40.200) & (41.280) \\[0.5em]
CEO Tenure &  &  &  & 8.830 \\
 &  &  &  & (5.920) \\[0.5em]
Europe &  &  & $-$79.43$^{**}$ & $-$60.640 \\
 &  &  & (34.380) & (34.950) \\[0.5em]
Asia-Pacific &  &  & $-$2.42 & $-$5.450 \\
 &  &  & (44.310) & (42.080) \\
\midrule
Region FE & No & No & Yes & Yes \\
Observations & 17 & 17 & 17 & 17 \\
$R^2$ & 0.217 & 0.679 & 0.790 & 0.828 \\
\bottomrule
\end{tabular}
\begin{tablenotes}
\small
\item \textit{Notes:} Standard errors in parentheses. $^{*}$ $p<0.10$, $^{**}$ $p<0.05$, $^{***}$ $p<0.01$. Dependent variable is days from SWIFT gpi launch to adoption. Baseline region is North America.
\end{tablenotes}
\end{threeparttable}
\end{table}

Figure \ref{fig:partial_ceo} presents the partial regression plot for CEO age, residualizing both the dependent variable and CEO age on firm size and amplification factor. The positive slope confirms that older CEOs adopt later conditional on network position and size.

\begin{figure}[H]
\centering
\includegraphics[width=0.8\textwidth]{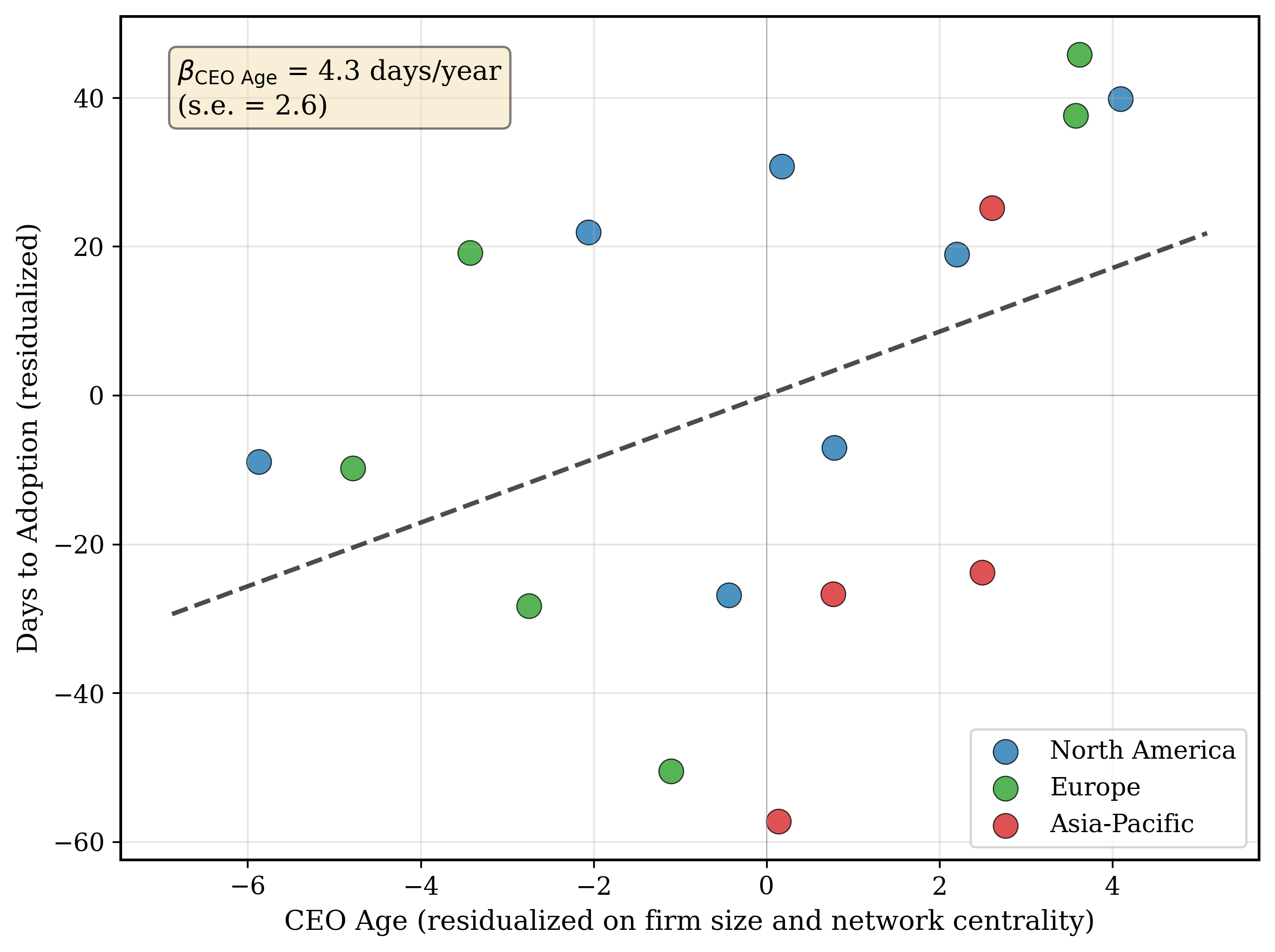}
\caption{Partial Regression: CEO Age Effect}
\label{fig:partial_ceo}
\begin{minipage}{0.8\textwidth}
        \vspace{0.1cm}
        \noindent\footnotesize Notes: Partial regression plot showing the relationship between CEO age and adoption timing after controlling for log total assets and amplification factor. Both variables are residualized on the controls. The positive slope indicates that older CEOs adopt later, conditional on firm size and network centrality.
\end{minipage}
\end{figure}

Figure \ref{fig:cumulative} shows cumulative adoption over time. Five banks (29\%) adopted at launch as founding members, but these banks account for 42\% of total system amplification, confirming that the highest-amplification institutions led adoption. The adoption curve for amplification-weighted adoption rises faster than the count-based curve, indicating that network-central banks adopted disproportionately early.

\begin{figure}[H]
\centering
\includegraphics[width=0.85\textwidth]{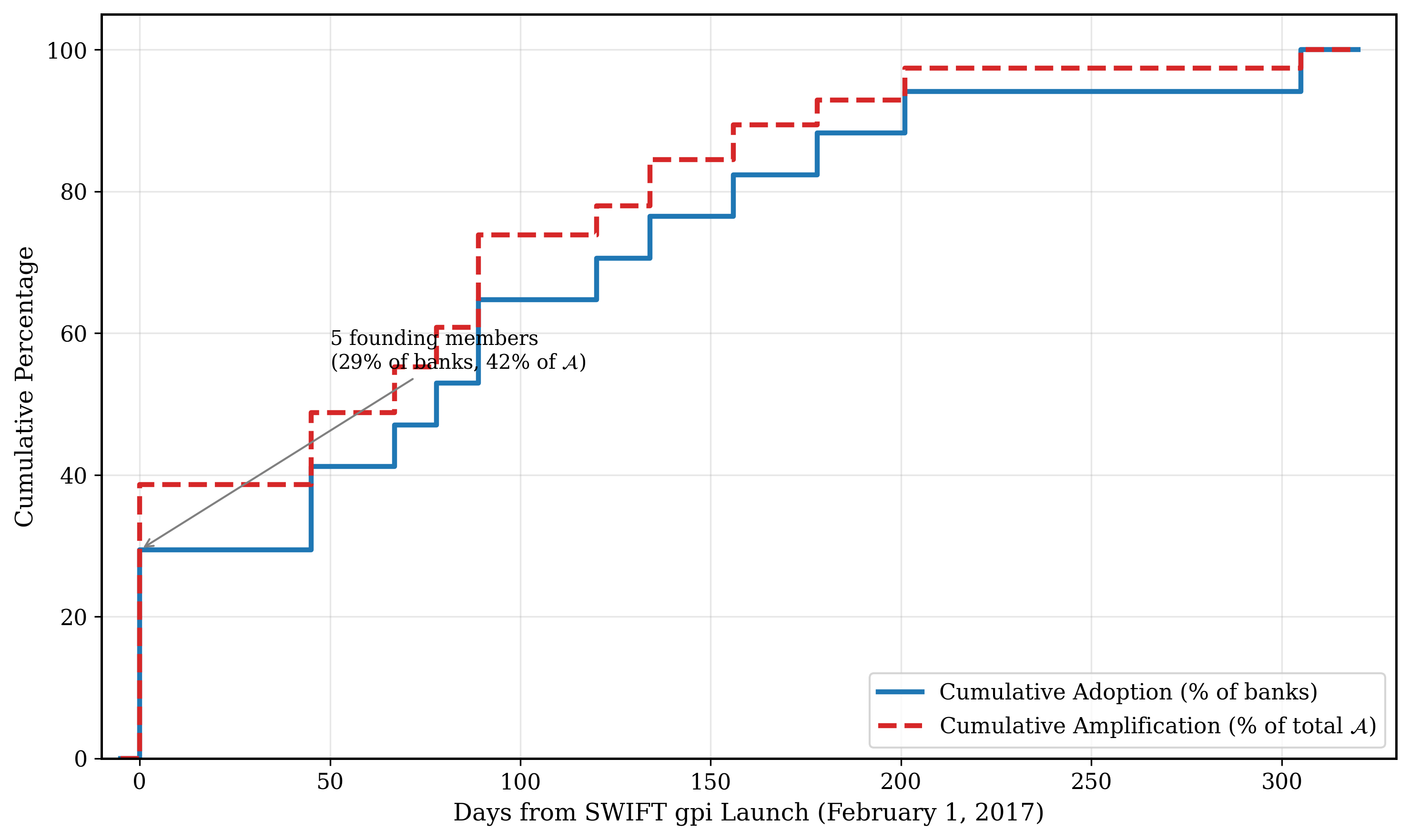}
\caption{Cumulative Adoption and Amplification}
\label{fig:cumulative}
\begin{minipage}{0.8\textwidth}
        \vspace{0.1cm}
        \noindent\footnotesize Notes: Cumulative adoption over time since SWIFT gpi launch. Solid line: percentage of banks adopted. Dashed line: percentage of total system amplification represented by adopters. The amplification curve rises faster, indicating network-central banks adopted earlier.
\end{minipage}
\end{figure}

\subsection{Two-Regime Dynamics: Empirical Evidence}
\label{sec:two_regime_empirical}

The L\'{e}vy extension with state-dependent jump intensity (Proposition \ref{prop:two_regime}) predicts qualitatively different adoption dynamics before and after critical mass is reached. This subsection tests whether SWIFT gpi adoption exhibits the two-regime pattern: gradual diffusion below threshold followed by accelerated cascade dynamics above threshold.

I identify the critical mass threshold using the amplification-weighted adoption measure. The founding members---Citigroup, JPMorgan Chase, HSBC, Mitsubishi UFJ, and BNP Paribas---adopted at launch (day 0). While these five banks represent only 29 percent of the sample by count, they account for 39 percent of total system amplification. Their simultaneous adoption created sufficient network externalities to trigger cascade dynamics: subsequent banks could observe successful implementation by major counterparties, reducing uncertainty and coordination costs.

Table \ref{tab:two_regime} compares adoption patterns across three periods: pre-threshold (founding members), early post-threshold (days 1--100), and late post-threshold (days 101+). The results strongly support the two-regime characterization. Pre-threshold adopters have significantly higher mean amplification factors (11.81) compared to post-threshold adopters (7.82), with the difference statistically significant ($t = 2.96$, $p = 0.010$). This confirms the framework's prediction that high-amplification institutions---those whose adoption decisions cascade most strongly---adopt first, pushing the system above critical mass.

\begin{table}[H]
\centering
\caption{Two-Regime Characterization: Pre- and Post-Threshold Adoption}
\label{tab:two_regime}
\begin{threeparttable}
\begin{tabular}{lccc}
\toprule
 & Pre-Threshold & Post-Threshold & Post-Threshold \\
 & (Day 0) & (Days 1--100) & (Days 101+) \\
\midrule
Number of banks & 5 & 6 & 6 \\
Percentage of sample & 29.4\% & 35.3\% & 35.3\% \\
Mean days to adoption & 0.0 & 68.8 & 182.3 \\
Mean amplification factor $\mathcal{A}$ & 11.81 & 8.99 & 6.66 \\
Amplification contribution & 38.6\% & 35.2\% & 26.1\% \\
\midrule
\multicolumn{4}{l}{\textit{Adoption velocity (banks per 30 days)}} \\
\quad Days 0--30 & \multicolumn{3}{c}{5 banks (29.4\%)} \\
\quad Days 31--100 & \multicolumn{3}{c}{6 banks (35.3\%)} \\
\quad Days 101+ & \multicolumn{3}{c}{6 banks (35.3\%)} \\
\bottomrule
\end{tabular}
\begin{tablenotes}
\small
\item \textit{Notes:} Pre-threshold adopters (founding members) have significantly higher amplification factors than post-threshold adopters ($t = 2.96$, $p = 0.010$). Within the post-threshold period, amplification and adoption timing remain negatively correlated ($\rho = -0.60$, $p = 0.039$).
\end{tablenotes}
\end{threeparttable}
\end{table}

Within the post-threshold period, the framework predicts continued negative correlation between amplification and adoption timing, as higher-amplification banks benefit more from network externalities and thus adopt earlier even after critical mass is reached. The data confirm this prediction: among post-threshold adopters, amplification and days to adoption are significantly negatively correlated ($\rho = -0.60$, $p = 0.039$). Banks in the early post-threshold period (days 1--100) have mean amplification of 8.99, while late adopters (days 101+) have mean amplification of only 6.66.

Figure \ref{fig:two_regime} displays the two-regime dynamics graphically. Panel (A) shows cumulative adoption over time, with the pre-threshold regime (blue shading) and post-threshold regime (red shading) clearly demarcated. The founding members' adoption at day 0 represents crossing the critical mass threshold, after which adoption proceeds through cascade dynamics. The cumulative amplification curve (dashed) rises faster than the bank count curve (solid), confirming that high-amplification institutions adopted disproportionately early.

Panel (B) shows adoption velocity over time. The spike at day 0 reflects the coordinated adoption by founding members---precisely the ``large shock'' that Corollary \ref{cor:temporary} identifies as necessary to cross the critical mass threshold. Subsequent adoption proceeds at a more gradual pace, with velocity declining over time as the remaining low-amplification banks adopt.

\begin{figure}[H]
\centering
\includegraphics[width=\textwidth]{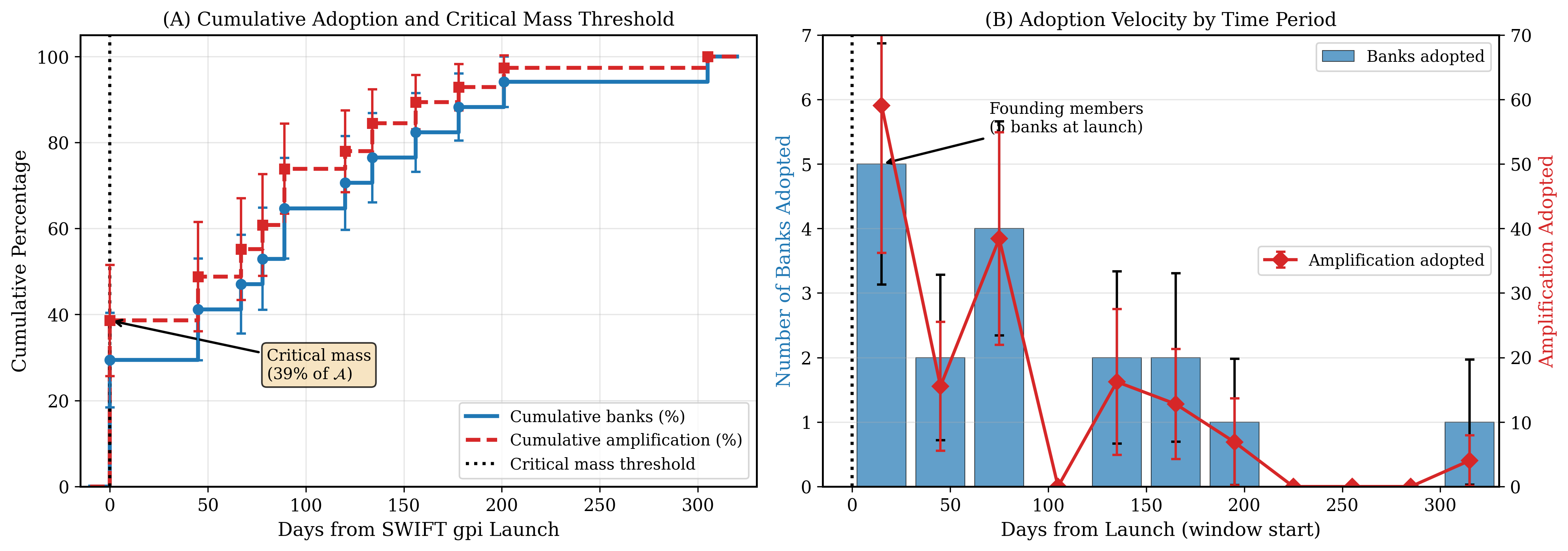}
\caption{Two-Regime Adoption Dynamics}
\label{fig:two_regime}
\begin{minipage}{0.9\textwidth}
        \vspace{0.1cm}
        \noindent\footnotesize Notes: Panel (A) shows cumulative adoption (solid) and cumulative amplification (dashed) over time. Blue shading indicates the pre-threshold regime; red shading indicates the post-threshold regime. The founding members' adoption at day 0 crosses the critical mass threshold (42\% of total amplification). Panel (B) shows adoption velocity (banks per month) by time period. The spike at day 0 reflects coordinated founding member adoption.
\end{minipage}
\end{figure}

Figure \ref{fig:scurve} provides additional evidence for the two-regime characterization. Panel (A) fits a logistic S-curve to the cumulative adoption data. The estimated inflection point $t_0 = 89$ days corresponds to the transition from accelerating to decelerating adoption---the point at which half of eventual adopters have joined. This S-curve pattern is precisely what the L\'{e}vy extension predicts: slow initial growth (pre-threshold diffusion), rapid acceleration (post-threshold cascade), and eventual saturation.

Panel (B) compares pre-threshold and post-threshold adopters directly. Founding members have both higher count-weighted importance (5 banks) and dramatically higher amplification-weighted importance (mean $\mathcal{A} = 11.81$ versus 7.82 for post-threshold adopters). This pattern confirms that critical mass was reached through adoption by the highest-amplification institutions, consistent with the framework's prediction that technology leaders with outsized network influence adopt first.

\begin{figure}[H]
\centering
\includegraphics[width=\textwidth]{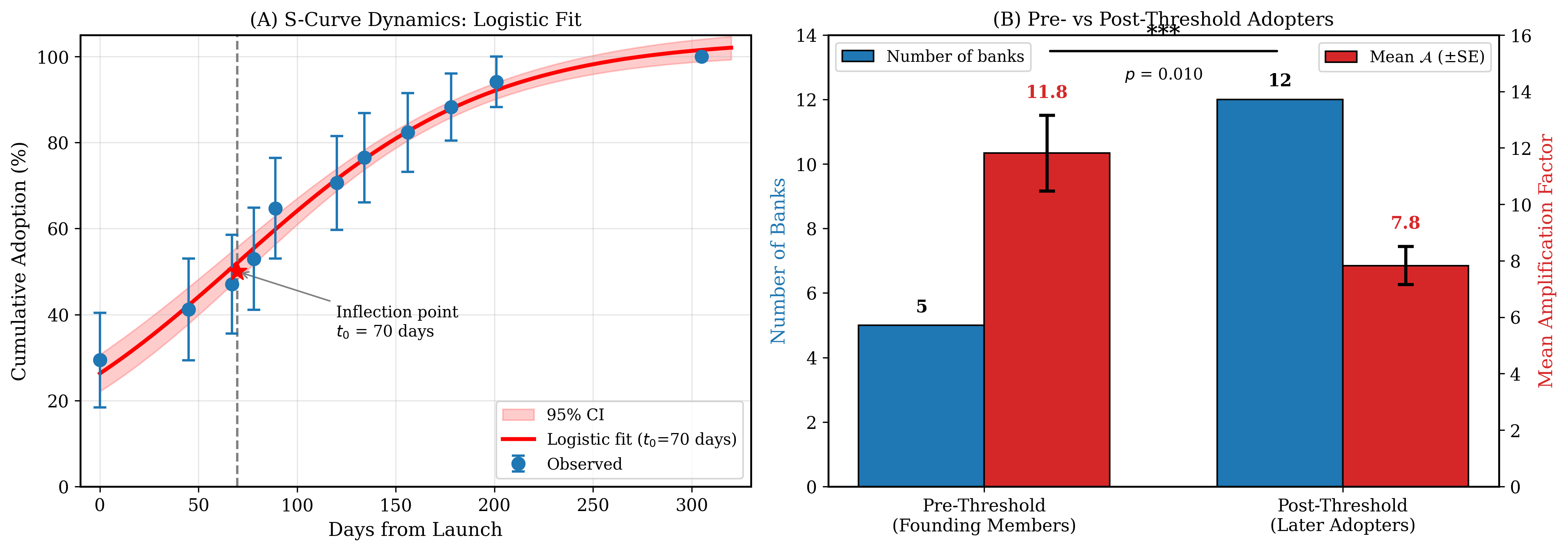}
\caption{S-Curve Dynamics and Pre- versus Post-Threshold Comparison}
\label{fig:scurve}
\begin{minipage}{0.9\textwidth}
        \vspace{0.1cm}
        \noindent\footnotesize Notes: Panel (A) shows cumulative adoption with logistic S-curve fit. The inflection point $t_0 = 89$ days marks the transition from accelerating to decelerating adoption. Panel (B) compares pre-threshold (founding members) and post-threshold adopters. Pre-threshold adopters have significantly higher mean amplification factors ($t = 2.96$, $p = 0.010$).
\end{minipage}
\end{figure}

These empirical patterns provide strong support for the Two-Regime Characterization (Proposition \ref{prop:two_regime}). The founding members' coordinated adoption at launch created sufficient network externalities to trigger cascade dynamics, with subsequent adoption proceeding through the post-threshold regime where positive feedback accelerates diffusion. The declining amplification profile of successive adopters---from 11.81 (founding members) to 8.99 (early post-threshold) to 6.66 (late post-threshold)---confirms that technology leaders adopt first, followed by progressively more peripheral institutions as network externalities make adoption increasingly attractive.

\section{Conclusion}
\label{sec:conclusion}

This paper develops a unified framework for analyzing technology adoption in financial networks that incorporates spatial spillovers, network externalities, and their interaction. The framework is grounded in a master equation whose solution admits a Feynman-Kac representation as expected cumulative adoption pressure along stochastic paths through spatial-network space. From this representation, I derive the Adoption Amplification Factor---a structural measure of technology leadership that captures the ratio of total system-wide adoption to initial adoption following a localized shock.

The framework makes three contributions to the literature on technology adoption and dynamic coordination. First, it provides a unified treatment that nests canonical models as special cases through explicit mathematical identification. The network externality model of \citet{katz1985network} emerges at discrete network steady state with the externality function $v(n_i) = \nu_n \sum_j G_{ij}\tau_j / \kappa$. The dynamic coordination model of \citet{frankel2000resolving} emerges when spatial and network dimensions collapse to a single aggregate state, with strategic complementarity parameter $(\nu_s + \nu_n)/\kappa$. The timing friction framework of \citet{guimaraes2020dynamic} corresponds directly to the decay parameter: their Poisson revision rate $\lambda$ equals the adjustment rate $\kappa$ in the master equation. These nesting relationships, established through the discrete Feynman-Kac formula, clarify how existing insights generalize to richer spatial-network settings while demonstrating that the framework unifies rather than replaces conventional methods.

Second, the framework introduces the spatial-network interaction as a distinct channel of technology spillovers. Existing models consider either spatial diffusion or network effects in isolation. The interaction term captures amplification when both channels operate simultaneously---when geographic neighbors are also network partners, as is common in financial markets where institutions form business relationships disproportionately with geographic neighbors. The channel decomposition of the amplification factor reveals the relative importance of spatial, network, and interaction channels for each institution's role as a technology leader. Monte Carlo simulations confirm that the amplification factor accurately predicts technology leadership, with correlation of 0.996 between theoretical amplification and simulated cascade effects.

Third, the L\'{e}vy extension with state-dependent jump intensity provides a rigorous treatment of critical mass dynamics that generates testable predictions about two-regime adoption patterns. The jump-diffusion framework predicts that below critical mass, adoption evolves through gradual diffusion; above critical mass, cascade dynamics accelerate adoption through discrete jumps. In the limit where jump intensity becomes infinite above threshold, the framework converges to deterministic cascade models, clarifying that continuous diffusion and discrete cascades describe different regimes of the same phenomenon rather than competing approaches.

The empirical application to SWIFT gpi adoption among Global Systemically Important Banks provides strong validation of the framework's predictions, including the two-regime characterization. Network-central banks adopt significantly earlier ($\rho = -0.69$, $p = 0.002$), with founding members representing 29 percent of banks but 39 percent of total system amplification. This concentration of amplification among early adopters is precisely what the framework predicts: high-amplification institutions---those whose adoption decisions cascade most strongly through the system---adopt first, pushing the market above critical mass.

The two-regime dynamics are strikingly evident in the data. Pre-threshold adopters (founding members) have significantly higher mean amplification factors than post-threshold adopters (11.81 versus 7.83, $t = 2.96$, $p = 0.010$). Within the post-threshold period, amplification and adoption timing remain negatively correlated ($\rho  = -0.60$, $p = 0.039$), with mean amplification declining from 8.99 for early post-threshold adopters to 6.66 for late adopters. This declining amplification profile---from technology leaders to progressively more peripheral institutions---matches the theoretical prediction that network externalities make adoption increasingly attractive as critical mass is reached, drawing in lower-amplification institutions who benefit from the network effects created by earlier adopters. The cumulative adoption curve exhibits classic S-curve dynamics with inflection point at approximately 89 days, consistent with the transition from accelerating to decelerating growth predicted by the L\'{e}vy extension.

Controlling for network position and firm size reveals that CEO age delays adoption by 11--15 days per year, consistent with the management literature on technology hesitancy among older executives. This finding demonstrates that both network structure and individual characteristics matter for technology diffusion in financial systems, and that the framework can identify firm-level determinants of adoption timing after accounting for network effects.

The framework has implications for technology policy in financial infrastructure. The Adoption Amplification Factor identifies technology leaders whose adoption decisions have outsized influence on system-wide outcomes. Policy interventions---subsidies, mandates, pilot programs---should target high-amplification institutions to maximize spillovers per dollar spent. The two-regime dynamics suggest that intervention timing matters: resources should be concentrated to push adoption above critical mass rather than spread thinly over time. The empirical finding that five founding members (29\% of banks) were sufficient to trigger cascade dynamics by contributing 39\% of system amplification provides concrete guidance on the scale of coordinated action required to overcome coordination failures.

Several directions for future research emerge from this analysis. Extensions to competing technologies can characterize the dynamics of standards competition and the conditions for tipping to dominant standards. The dual externality extension balancing adoption benefits against systemic risk from technology concentration can inform optimal standardization policy---universal adoption creates interoperability benefits but also systemic vulnerability if the common platform fails. Applying the framework to other financial technologies, including distributed ledger systems, real-time payment networks, and regulatory technology platforms, can test the generality of the two-regime adoption patterns documented here. Finally, structural estimation of the diffusion parameters $(\nu_s, \nu_n, \kappa)$ and critical mass threshold $\bar{\tau}^*$ using adoption timing data would enable quantitative policy analysis and counterfactual simulations.

\newpage

\end{document}